\newtheorem{thm}{Theorem}
\newtheorem{lemma}{Lemma}
\newtheorem{proposition}{Proposition}
\begin{document}
%

\title{Deterministic-Random Tradeoff of Integrated Sensing and Communications in Gaussian Channels: \\ A Rate-Distortion Perspective}
%
%
%

\author{\IEEEauthorblockN{Fan Liu\IEEEauthorrefmark{1},
Yifeng Xiong\IEEEauthorrefmark{2},
Kai Wan\IEEEauthorrefmark{3},
Tony Xiao Han\IEEEauthorrefmark{4},
and Giuseppe Caire\IEEEauthorrefmark{5}}
\IEEEauthorblockA{\IEEEauthorrefmark{1}Southern University of Science and Technology, China}
\IEEEauthorblockA{\IEEEauthorrefmark{2}Beijing University of Posts and Telecommunications, China}
\IEEEauthorblockA{\IEEEauthorrefmark{3}Huazhong University of Science and Technology, China}
\IEEEauthorblockA{\IEEEauthorrefmark{4}Huawei Technologies Co. Ltd., China}
\IEEEauthorblockA{\IEEEauthorrefmark{5}	Technische Universit\"at Berlin, Germany}
}

\maketitle

\begin{abstract}
Integrated sensing and communications (ISAC) is recognized as a key enabling technology for future wireless networks. To shed light on the fundamental performance limits of ISAC systems, this paper studies the deterministic-random tradeoff between sensing and communications (S\&C) from a rate-distortion perspective under vector Gaussian channels. We model the ISAC signal as a random matrix that carries information, whose realization is perfectly known to the sensing receiver, but is unknown to the communication receiver. We characterize the sensing mutual information conditioned on the random ISAC signal, and show that it provides a universal lower bound for distortion metrics of sensing. Furthermore, we prove that the distortion lower bound is minimized if the sample covariance matrix of the ISAC signal is deterministic. We then offer our understanding of the main results by interpreting wireless sensing as non-cooperative source-channel coding, and reveal the deterministic-random tradeoff of S\&C for ISAC systems. Finally, we provide sufficient conditions for the achievability of the distortion bound by analyzing a specific example of target response matrix estimation.
\end{abstract}

\begin{IEEEkeywords}
Integrated sensing and communications, rate-distortion theory, fundamental limits.
\end{IEEEkeywords}

%
\IEEEpeerreviewmaketitle

\section{Introduction}
\subsection{Background and Motivation}
The future wireless networks are anticipated to possess high-precision and robust wireless sensing capability in addition to the communications functionality, thus to support a variety of emerging applications, ranging from intelligent transportation to smart cities and homes. To that end, ISAC system, which enables the shared use of hardware, spectrum, and signaling resources between S\&C, are envisioned to be one of the game-changing techniques for 6G and Wi-Fi-7 networks \cite{dual_functional}. While ISAC signal processing and waveform design have been extensively studied in the past few years, its fundamental limits and the resulting performance tradeoff between S\&C were less understood, which have been long-standing open in the research community \cite{anliu_fundamantal}.

\subsection{Existing Works}
Both S\&C focus on processing signals and information, whose theoretical foundations are built upon estimation, detection, and information theories. In particular, sensing is to extract useful information about targets of interest from observed echo signals, whereas communication is to recover the information encoded by the transmitter from the received signals. For decades, S\&C are regarded as two separated research fields, despite that they are closely related to each other as an ``information-theoretic odd couple" \cite{odd_couple}. Indeed, the fundamental theories and performance metrics of S\&C may be bridged in a variety of ways. The most well-known result is the I-MMSE equation \cite{immse}, which states that for a scalar Gaussian channel, the derivative of the mutual information (MI) between the input and output with respect to the signal-to-noise ratio (SNR) is the minimum mean squared error (MMSE) of estimating the input from the output. The I-MMSE equation may also be deduced from the De Bruijn identity \cite{immse}, which connects the differential entropy and Fisher information. Moreover, the detection probability and relative entropy can be linked to each other via the celebrated Stein's lemma \cite{MIMO_radar_KLD}.

To reveal more insights into the fundamental limits of ISAC systems, recent works modeled the monostatic radar sensing as a delayed feedback channel depending on the target states \cite{cdit}. In such a case, ISAC transmission can be treated as a joint state estimation and communication problem, whose performance limits are characterized by the tradeoff between the communication capacity and state estimation distortion. More relevant to this work, the performance limits of ISAC were depicted by the Cram\'er-Rao bound (CRB)-communication rate
 region in \cite{xiong2022flowing,xiong2022fundamental}, which unveiled the fundamental deterministic-random tradeoff (DRT) between S\&C.

\subsection{Contributions of This Paper}
In this paper, we re-examine the ISAC performance tradeoff from a rate-distortion perspective. In particular, we extend the DRT between the communication rate and CRB in \cite{xiong2022flowing,xiong2022fundamental} to any well-defined distortion metrics for sensing. We commence by modeling the ISAC signal emitted from an ISAC transmitter (Tx) as a random signal carrying information intended for the communication receiver (Rx), which is perfectly known to the sensing Rx as a reference waveform as in typical radar applications. By analyzing the sensing MI conditioned on the randomly varying ISAC signal, we show that it provides a lower bound for well-defined sensing distortion metrics, e.g., MSE and detection probability. We then underline the generic DRT in an ISAC system by proving that the distortion lower bound is minimized if the sample covariance matrix of the ISAC signal is deterministic, in which case the achievable communication rate is reduced owing to the decrease of randomness, or equivalently, the reduced degrees of freedom (DoFs) in the signal. As a step further, we provide a discussion on the main results by offering a new angle that interprets the sensing operation in an ISAC system as a non-cooperative source-channel coding system, where the target as a non-cooperative source transmits the information about its parameters to the sensing Rx in a passive manner. Finally, we analyze an example of target response matrix estimation, and provide sufficient conditions for the achievability of the proposed lower bound.

\begin{figure}[t]
\centering
\begin{minipage}{.2\textwidth}
 \centering
\includegraphics[width=.99\textwidth]{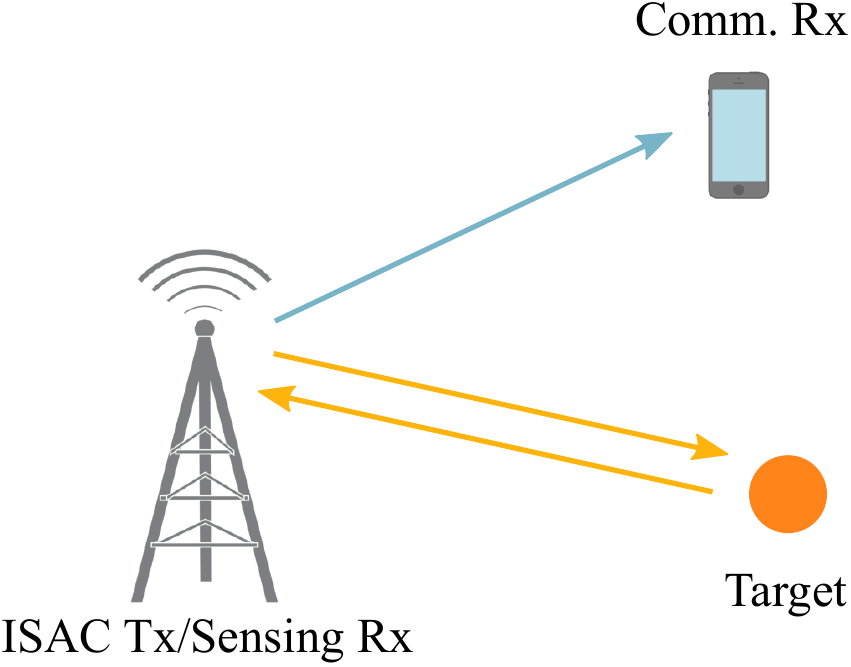}
\vspace{-1mm}
\footnotesize  (a) Monostatic sensing
\end{minipage}
\hspace{5mm}
\begin{minipage}{.21\textwidth}
\centering
\includegraphics[width=.99\textwidth]{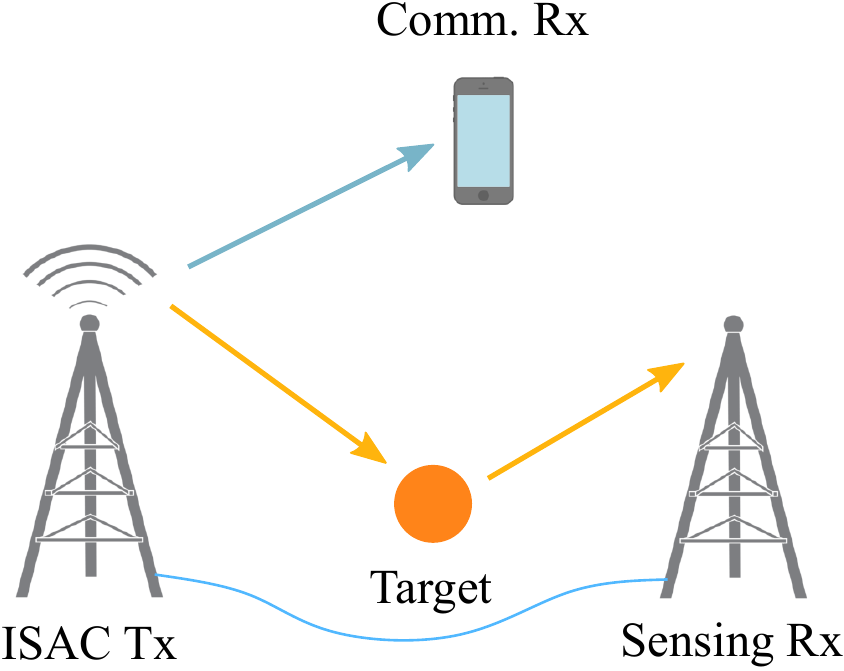}
\footnotesize  (b) Bistatic sensing
\end{minipage}
\caption{The ISAC scenarios considered in this paper, where the dual-functional waveform $\mathbf{X}$ is known to both the ISAC transmitter and sensing receiver.}
\label{fig:scenarios}
\end{figure}

\section{System Model and Performance Metrics}
\subsection{System Model}
We consider a generic point-to-point (P2P) downlink ISAC system consisting of an ISAC Tx, a sensing Rx, a communication Rx, and one or multiple targets. The ISAC Tx sends a dual-functional signal to perform target sensing and downlink communication simultaneously. Under a vector Gaussian channel, the S\&C signals received at the sensing Rx and communication Rx can be respectively modeled as
\begin{equation}\label{eq1}
\begin{gathered}
  {{\mathbf{Y}}_s} = {{\mathbf{H}}_s}\left( {\boldsymbol{{\upeta}}} \right){\mathbf{X}} + {{\mathbf{Z}}_s}, \hfill \\
   {{\mathbf{Y}}_c} = {{\mathbf{H}}_c}{\mathbf{X}} + {{\mathbf{Z}}_c},\hfill \\ 
\end{gathered}
\end{equation}
where $\mathbf{X} \in \mathbb{C}^{M \times T}$ is the dual-functional signal matrix transmitted from the ISAC Tx, with $M$ being the number of antennas at the ISAC Tx and $T$ being the number of discrete samples; ${{\mathbf{H}}_s} \in \mathbb{C}^{N_s \times M}$ and ${{\mathbf{H}}_c} \in \mathbb{C}^{N_c \times M}$ are S\&C channel matrices, with $N_s$ and $N_c$ being the numbers of antennas at the sensing and communication Rxs, respectively; ${{\mathbf{Z}}_s} \in \mathbb{C}^{N_s \times T}$ and ${{\mathbf{Z}}_c} \in \mathbb{C}^{N_c \times T}$ are zero mean white Gaussian noise matrices with variances $\sigma_s^2$ and $\sigma_c^2$, respectively. In particular, $\boldsymbol{{\upeta}} \in \mathbb{R}^{K}$ represents the target parameters of interest, e.g., angle, range, and velocity, with $K$ being the dimension of the target parameters. The sensing channel ${{\mathbf{H}}_s}:\mathbb{R}^{K} \to \mathbb{C}^{N_s \times M}$ is assumed to be a deterministic function of $\boldsymbol{{\upeta}}$. 
 Since the sensing Rx is typically collocated with the ISAC Tx (monostatic sensing), or is connected with the ISAC Tx with an optical fiber (bistatic sensing), the dual-functional signal $\mathbf{X}$ is known to both ISAC Tx and sensing Rx, which is valid for most of radar applications. On the other hand, as $\mathbf{X}$ contains useful information intended for the communication Rx, it is unknown to the communication Rx. Therefore, we model $\mathbf{X}$ as a random matrix following a distribution ${{p_{\mathbf{X}}}\left( {\bm{X}} \right)}$, whose realization is known to the ISAC Tx and sensing Rx, but is unknown to the communication Rx. We also assume $\mathbb{E}\left\{\mathbf{X}\right\} = \mathbf{0}$, and denote the sample and statistical covariance matrices of $\mathbf{X}$ as $\mathbf{R}_X =  T^{-1} \mathbf{X}\mathbf{X}^H$, and ${\bm{\tilde R}_X} = \mathbb{E}\left\{\mathbf{R}_X\right\}$, respectively.


Accordingly, we define S\&C tasks in the ISAC system as
\begin{itemize}
\item \textbf{Sensing Task:} Estimate $\boldsymbol{{\upeta}} \in \mathbb{R}^{K}$ from the observation $\mathbf{Y}_s$ at the sensing Rx, with the knowledge of the probing signal $\mathbf{X}$.
\item \textbf{Communication Task:} Recover the useful information contained in $\mathbf{X}$ from the received signal $\mathbf{Y}_c$ at the communication Rx, with the knowledge (or statistical knowledge) of the channel ${{\mathbf{H}}_c}$.
\end{itemize}

Without loss of generality, we assume ${\boldsymbol{{\upeta}}} \sim p_{\boldsymbol{{\upeta}}}\left(\bm{\eta}\right)$, which vary every $T$ samples in an i.i.d. manner. Moreover, the communication channel is assumed to vary in an i.i.d. manner every $kT$ samples, with $k \in \mathbb{Z}^{+}$.

\subsection{Sensing Performance Metrics}
The sensing performance can be measured by various metrics that characterize either the estimation accuracy or detection reliability, namely, mean squared error (MSE), detection probability, and Cram\'er-Rao bound (CRB). These metrics (or their functions) may be induced from the distortion function $d\left(\boldsymbol{\upeta},\boldsymbol{\hat{\upeta}}\right)$  between the parameter $\boldsymbol{\upeta}$ and its estimate $\boldsymbol{\hat{\upeta}}$ in the context of rate-distortion theory \cite{shannon1959coding,berger1971rate,nit}. For the estimation problem, the squared Euclidean distance $d\left(\boldsymbol{\upeta},\boldsymbol{\hat{\upeta}}\right) = \left\| {\boldsymbol{\upeta}-\boldsymbol{\hat{\upeta}}} \right\|^2$ is a commonly used distortion function, which induces the MSE metric. For the detection problem, $\upeta \in \left\{0,1\right\}$ is a random binary variable indicating whether the target is present or absent. To that end, one may choose the distortion as the Hamming distance between $\upeta$ and $\hat{\upeta}$, namely, $d\left(\upeta, \hat{\upeta}\right) = \upeta\oplus\hat{\upeta}$. The average distortion is then given as
\begin{equation}\label{eq_distortion}
\begin{gathered}
  \mathbb{E}\left\{ \upeta\oplus\hat{\upeta} \right\} =  \hfill \\  \left( {1 \oplus 1} \right)\Pr \left( {\hat{\upeta}  = 1\left| {\upeta  = 1} \right.} \right)  + \left( {0 \oplus 0} \right)\Pr \left( {\hat{\upeta}  = 0\left| {\upeta  = 0} \right.} \right) \hfill \\
   + \left( {1 \oplus 0} \right)\Pr \left( {\hat{\upeta}  = 1\left| {\upeta  = 0} \right.} \right)  + \left( {0 \oplus 1} \right)\Pr \left( {\hat{\upeta}  = 0\left| {\upeta  = 1} \right.} \right) \hfill \\
= 1 - {P_D} + {P_{FA}}, \hfill \\ 
\end{gathered}
\end{equation}
where $P_D$ and $P_{FA}$ stand for the detection and false-alarm probabilities, respectively. Under the Neyman-Pearson criterion where $P_{FA}$ is fixed, minimizing the average Hamming distortion (\ref{eq_distortion}) yields the maximum $P_D$. As we will show later, these distortion metrics connect closely to the sensing MI. Therefore, we will focus on characterizing the sensing MI and revealing its connection with other sensing metrics.

For the sensing model in (\ref{eq1}), the sensing MI is defined as
\begin{equation}\label{eq4}
{I_s} = I\left( {{{\mathbf{Y}}_s};\boldsymbol{\upeta} \left| {\mathbf{X}} \right.} \right).
\end{equation}
At the first glance, the MI (\ref{eq4}) is unlikely to be simplified due to the possible nonlinear dependence between ${{\mathbf{Y}}_s}$ and $\boldsymbol{\upeta}$, namely, the nonlinearity of ${{\mathbf{H}}_s}\left( {\boldsymbol{{\upeta}}} \right)$. Fortunately, the following lemma admits a more tractable form of the sensing MI.
\begin{lemma}
The MI between ${{{\mathbf{Y}}}_s}$ and $\boldsymbol{\upeta}$ equals to that between ${{{\mathbf{Y}}}_s}$ and ${{{\mathbf{H}}}_s}$, i.e.,
\begin{equation}\label{eq5}
I_s = I\left( {{{{\mathbf{Y}}}_s};\boldsymbol{\upeta}\left| {\mathbf{X}} \right.} \right) = I\left( {{{{\mathbf{Y}}}_s};{{{\mathbf{H}}}_s}\left| {\mathbf{X}} \right.} \right).
\end{equation}
\end{lemma}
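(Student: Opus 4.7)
The plan is to prove the lemma by expanding the joint conditional mutual information $I(\mathbf{Y}_s;\boldsymbol{\upeta},\mathbf{H}_s\,|\,\mathbf{X})$ in two different ways using the chain rule, and then showing that each of the resulting cross terms vanishes for a distinct structural reason. The underlying fact that I will exploit is the Markov chain $\boldsymbol{\upeta}\to \mathbf{H}_s\to \mathbf{Y}_s$ (conditioned on $\mathbf{X}$ and on the independent noise $\mathbf{Z}_s$), which follows directly from the two assumptions built into the model~(\ref{eq1}): (i) $\mathbf{H}_s=\mathbf{H}_s(\boldsymbol{\upeta})$ is a \emph{deterministic} function of $\boldsymbol{\upeta}$, and (ii) the sensing noise $\mathbf{Z}_s$ is statistically independent of $(\boldsymbol{\upeta},\mathbf{X})$.

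First, I would write the chain-rule decomposition
\begin{equation*}
I(\mathbf{Y}_s;\boldsymbol{\upeta},\mathbf{H}_s\,|\,\mathbf{X})
= I(\mathbf{Y}_s;\boldsymbol{\upeta}\,|\,\mathbf{X}) + I(\mathbf{Y}_s;\mathbf{H}_s\,|\,\boldsymbol{\upeta},\mathbf{X})
= I(\mathbf{Y}_s;\mathbf{H}_s\,|\,\mathbf{X}) + I(\mathbf{Y}_s;\boldsymbol{\upeta}\,|\,\mathbf{H}_s,\mathbf{X}).
\end{equation*}
Comparing the two equal expansions, the lemma reduces to showing that the two conditional mutual information terms $I(\mathbf{Y}_s;\mathbf{H}_s\,|\,\boldsymbol{\upeta},\mathbf{X})$ and $I(\mathbf{Y}_s;\boldsymbol{\upeta}\,|\,\mathbf{H}_s,\mathbf{X})$ both vanish.

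For the first, since $\mathbf{H}_s$ is a deterministic function of $\boldsymbol{\upeta}$, conditioning on $\boldsymbol{\upeta}$ already fixes $\mathbf{H}_s$, so $H(\mathbf{H}_s\,|\,\boldsymbol{\upeta},\mathbf{X})=0$ and consequently $I(\mathbf{Y}_s;\mathbf{H}_s\,|\,\boldsymbol{\upeta},\mathbf{X})=0$. For the second, conditioned on $\mathbf{H}_s$ and $\mathbf{X}$ the observation $\mathbf{Y}_s=\mathbf{H}_s\mathbf{X}+\mathbf{Z}_s$ is a deterministic function of the noise $\mathbf{Z}_s$ alone, and $\mathbf{Z}_s$ is independent of $\boldsymbol{\upeta}$; hence the conditional distribution of $\mathbf{Y}_s$ given $(\mathbf{H}_s,\mathbf{X},\boldsymbol{\upeta})$ does not depend on $\boldsymbol{\upeta}$, yielding $I(\mathbf{Y}_s;\boldsymbol{\upeta}\,|\,\mathbf{H}_s,\mathbf{X})=0$. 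Substituting these back gives the claimed equality.

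I do not anticipate any serious obstacle in this proof; it is essentially a bookkeeping exercise with the chain rule. The only point that warrants care is to make explicit \emph{why} each cross term is zero, since the two reasons are genuinely different (functional determinism on one side, noise independence on the other), and to verify that the conditioning on the random matrix $\mathbf{X}$ does not disturb either argument — which it does not, because $\mathbf{X}$ is assumed independent of $(\boldsymbol{\upeta},\mathbf{Z}_s)$ under the i.i.d.\ source and noise model described around~(\ref{eq1}).
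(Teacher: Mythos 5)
Your proof is correct and follows essentially the same route as the paper's Appendix A: a double chain-rule expansion of $I(\mathbf{Y}_s;\boldsymbol{\upeta},\mathbf{H}_s\,|\,\mathbf{X})$, with one cross term vanishing because $\mathbf{H}_s$ is a deterministic function of $\boldsymbol{\upeta}$ and the other because of the Markov chain $\boldsymbol{\upeta}\to\mathbf{H}_s\to\mathbf{Y}_s$ given $\mathbf{X}$. Your explicit justification of the conditional independence via the noise $\mathbf{Z}_s$ is just a more detailed statement of the same argument.
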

\renewcommand{\qedsymbol}{$\blacksquare$}
\begin{proof}
See Appendix A.
\end{proof}

\subsection{Communication Performance Metric}
The communication performance can be measured by the ergodic achievable rate, which is expressed as
\begin{equation}\label{eq2}
{I_c} = \mathop {\max }\limits_{{p_{\mathbf{X}}}\left( {\bm{X}} \right)} \;T^{-1}I\left( {{{\mathbf{Y}}_c};{\mathbf{X}}\left| {{{\mathbf{H}}_c}} \right.} \right),\;\;\operatorname{s.t.}\;\;{p_{\mathbf{X}}}\left( \bm{X} \right) \in \mathcal{F},
\end{equation}
where $I\left( {{{\mathbf{Y}}_c};{\mathbf{X}}\left| {{{\mathbf{H}}_c}} \right.} \right)$ stands for the mutual information (MI) between $\mathbf{Y}_c$ and $\mathbf{X}$ conditioned on $\mathbf{H}_c$, and $\mathcal{F}$ represents the feasible set of the distribution ${{p_{\mathbf{X}}}\left( {\bm{X}} \right)} $ under some constraint such as power and sensing performance constraints.

\section{Main Results}
With Lemma 1 at hand, we first prove that the sensing MI has the following property.
\begin{lemma}
$I\left( {{{\mathbf{Y}}_s};\boldsymbol{\upeta} \left| {\mathbf{X}} \right. = \bm{A}} \right)$ is a concave function in $\bm{R}_A =  T^{-1} \bm{A}\bm{A}^H$.
\end{lemma}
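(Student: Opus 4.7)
The plan is to combine Lemma~1 with the known concavity of mutual information in the channel Gram matrix of a linear vector Gaussian channel.

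First, I would apply Lemma~1 to rewrite the target as $I\!\left(\mathbf{Y}_s;\mathbf{H}_s\,\middle|\,\mathbf{X}=\mathbf{A}\right)$, eliminating the nonlinear dependence on $\boldsymbol{\upeta}$. Vectorizing the sensing observation in (\ref{eq1}) then recasts the problem as a standard linear vector Gaussian channel
$$\operatorname{vec}(\mathbf{Y}_s) = (\mathbf{A}^T\otimes\mathbf{I}_{N_s})\,\operatorname{vec}(\mathbf{H}_s)+\operatorname{vec}(\mathbf{Z}_s),$$
whose noise is complex Gaussian of covariance $\sigma_s^{2}\mathbf{I}$, and whose channel Gram is, by a direct Kronecker computation,
$$(\mathbf{A}^T\otimes\mathbf{I}_{N_s})^H(\mathbf{A}^T\otimes\mathbf{I}_{N_s}) = (\mathbf{A}\mathbf{A}^H)^{*}\otimes\mathbf{I}_{N_s} = T\,\mathbf{R}_A^{*}\otimes\mathbf{I}_{N_s}.$$
Hence the conditional MI depends on $\mathbf{A}$ only through $\mathbf{R}_A$.

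The next and main step is to invoke the concavity of the MI of a linear vector Gaussian channel in its Gram matrix: for any fixed input distribution $p_{\mathbf{h}}$ and Gaussian noise, the map $\mathbf{Q}\mapsto I(\mathbf{h};\mathbf{B}\mathbf{h}+\mathbf{z})\big|_{\mathbf{B}^H\mathbf{B}=\mathbf{Q}}$ is concave on the positive semi-definite cone. This can be established via the matrix I--MMSE identity of Palomar--Verd\'u: the gradient of MI with respect to $\mathbf{Q}$ equals the input MMSE matrix, which is matrix-monotone decreasing in $\mathbf{Q}$, and a more refined Hessian computation yields the negative semi-definiteness that encodes concavity. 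Because the map $\mathbf{R}_A\mapsto T\mathbf{R}_A^{*}\otimes\mathbf{I}_{N_s}$ is linear (and complex conjugation preserves the PSD cone), concavity in the Gram transfers directly to concavity in $\mathbf{R}_A$, establishing the lemma.

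The step I expect to be the main obstacle is the concavity statement for a \emph{general} (possibly non-Gaussian) prior on $\mathbf{H}_s$. In the Gaussian-$\mathbf{H}_s$ special case, which is the running example worked out later in the paper, the MI reduces to an explicit $\log\det$ expression in $\mathbf{R}_A$ whose concavity is immediate from that of $\log\det$ on the PSD cone; this would provide a quick alternative route if the authors only need the result under that model. For an arbitrary prior, however, one must appeal to the matrix I--MMSE machinery, with additional care needed for the complex-valued gradient and for the Hessian argument that underpins the matrix-monotone property of the MMSE matrix.
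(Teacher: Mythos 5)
Your proposal follows essentially the same route as the paper's Appendix~B: vectorize the observation to get the linear model $\bm{\tilde A}\mathbf{h}_s+\mathbf{z}_s$, note that the conditional MI depends on $\bm{A}$ only through the Gram $\bm{\tilde A}^H\bm{\tilde A}=T\bm{R}_A^*\otimes\mathbf{I}_{N_s}$, invoke the known concavity of the MI of a linear vector Gaussian channel in its (complex) Gram matrix --- which the paper simply cites as \cite[Theorem 1]{MI\_concave\_complex} rather than re-deriving via the matrix I--MMSE/Hessian machinery you sketch --- and transfer concavity through the linear, PSD-preserving map $\bm{R}_A\mapsto T\bm{R}_A^*\otimes\mathbf{I}_{N_s}$. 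The argument is correct and the step you flag as the main obstacle is precisely the externally cited theorem, so no gap remains.
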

\begin{proof}
See Appendix B.
\end{proof}
With Lemma 2, we may write $I\left( {{{\mathbf{Y}}_s};\boldsymbol{\upeta} \left| {\mathbf{X}} \right. = \bm{A}} \right)$ as a function of ${\bm{R}}_A$, namely, ${I_{\boldsymbol{{\upeta}}} }\left( {{{\bm{R}}_A}} \right)$. We then prove that the following proposition holds.
\begin{proposition}
Let the average transmit power be $P_T$, namely, $\operatorname{tr}\left(\mathbb{E}\left\{\mathbf{R}_X \right\}\right) = P_T$. The sensing MI (\ref{eq4}) is maximized if and only if the support of the sample covariance matrix $\mathbf{R}_X =  T^{-1} \mathbf{X}\mathbf{X}^H$ is the solution set of the following deterministic convex optimization problem
\begin{equation}\label{eq10}
\begin{gathered}
  \mathop {\max }\limits_{{{\bm{R}}_A} \succeq {\mathbf{0}},\;{{\bm{R}}_A} = {\bm{R}}_A^H} \;{I_{\boldsymbol{{\upeta}}} }\left( {{{\bm{R}}_A}} \right)\quad\operatorname{s .t.}\;\;\;\operatorname{tr} \left( {{{\bm{R}}_A}} \right) = {P_T}, \hfill \\ 
\end{gathered}
\end{equation}
in which case $\mathbf{R}_X$ has a deterministic trace. In particular, if problem (\ref{eq10}) has a unique solution, then $\mathbf{R}_X$ itself is deterministic, i.e., $\mathbf{R}_X = \mathbb{E}\left\{\mathbf{R}_X\right\} = \bm{\tilde R}_X$. 
\end{proposition}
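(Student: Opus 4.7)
The plan is to exploit the concavity of $I_{\boldsymbol{\upeta}}(\mathbf{R}_A)$ from Lemma~2 by expressing the sensing mutual information as an expectation and then invoking Jensen's inequality, after which the conditions for tightness deliver the proposition. First, since $I(\mathbf{Y}_s;\boldsymbol{\upeta}|\mathbf{X}=\mathbf{A})$ depends on $\mathbf{A}$ only through $\mathbf{R}_A$ (as asserted by Lemma~2), I would write
\begin{equation*}
I_s = \mathbb{E}_{\mathbf{X}}\!\left[I_{\boldsymbol{\upeta}}(\mathbf{R}_X)\right].
\end{equation*}
Concavity of $I_{\boldsymbol{\upeta}}$ on the PSD cone plus Jensen's inequality then give $I_s \leq I_{\boldsymbol{\upeta}}(\mathbb{E}[\mathbf{R}_X]) = I_{\boldsymbol{\upeta}}(\tilde{\mathbf{R}}_X)$, and since $\tilde{\mathbf{R}}_X$ is Hermitian PSD with $\operatorname{tr}(\tilde{\mathbf{R}}_X)=P_T$, it is feasible for (\ref{eq10}); hence $I_s\leq I^*$, where $I^*$ denotes the optimal value of (\ref{eq10}). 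The solution set $\mathcal{S}^*$ of (\ref{eq10}) is convex by concavity of $I_{\boldsymbol{\upeta}}$ and linearity of the trace constraint.

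The sufficient direction is immediate: if $\operatorname{supp}(\mathbf{R}_X)\subseteq\mathcal{S}^*$, then almost surely $I_{\boldsymbol{\upeta}}(\mathbf{R}_X)=I^*$ and $\operatorname{tr}(\mathbf{R}_X)=P_T$, so $I_s=I^*$ is attained and the trace is deterministic; convexity of $\mathcal{S}^*$ also places $\tilde{\mathbf{R}}_X\in\mathcal{S}^*$, consistent with the average-power budget.

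For the necessary direction, I would assume $I_s=I^*$ so that both the Jensen step and the feasibility step are tight. To extract the structure of $\mathbf{R}_X$ from Jensen tightness alone (which for a merely concave function only restricts the support to an affine piece of $I_{\boldsymbol{\upeta}}$), I would introduce the scalar value function $h(P):=\max\{I_{\boldsymbol{\upeta}}(\mathbf{R}): \mathbf{R}\succeq\mathbf{0},\,\operatorname{tr}(\mathbf{R})=P\}$, which inherits concavity from $I_{\boldsymbol{\upeta}}$ via a convex-combination argument on feasible $\mathbf{R}$'s. Chaining $\mathbb{E}[I_{\boldsymbol{\upeta}}(\mathbf{R}_X)]\leq \mathbb{E}[h(\operatorname{tr}(\mathbf{R}_X))]\leq h(\mathbb{E}[\operatorname{tr}(\mathbf{R}_X)])=h(P_T)=I^*$ and using that the extreme values agree forces (i)~Jensen on $h$ to be tight, which under strict concavity of $h$ at $P_T$ (equivalently, a positive and unique KKT multiplier for the trace constraint) implies $\operatorname{tr}(\mathbf{R}_X)=P_T$ a.s., and (ii)~$I_{\boldsymbol{\upeta}}(\mathbf{R}_X)=h(P_T)=I^*$ a.s., i.e., $\mathbf{R}_X\in\mathcal{S}^*$ a.s. The uniqueness clause then follows by collapsing $\mathcal{S}^*=\{\tilde{\mathbf{R}}_X^*\}$ to a single point, yielding $\mathbf{R}_X=\tilde{\mathbf{R}}_X^*$ deterministically.

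The hard part will be the necessity direction, specifically converting Jensen equality for a possibly non-strictly-concave $I_{\boldsymbol{\upeta}}$ into a statement about the support of $\mathbf{R}_X$ itself rather than just about $\operatorname{tr}(\mathbf{R}_X)$. The auxiliary function $h$ together with strict concavity at $P_T$ is the natural route; without such a structural hypothesis, Jensen alone only locates $\operatorname{supp}(\mathbf{R}_X)$ on a supporting hyperplane of $I_{\boldsymbol{\upeta}}$ at $\tilde{\mathbf{R}}_X$, which contains $\mathcal{S}^*$ but can in principle be strictly larger.
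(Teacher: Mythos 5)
Your proposal follows essentially the same route as the paper's (which is delegated to \cite[Prop.~3]{xiong2022fundamental}): write the conditional MI as $I_s=\mathbb{E}\left\{I_{\boldsymbol{\upeta}}\left(\mathbf{R}_X\right)\right\}$, use the concavity from Lemma~2 and Jensen to get $I_s\le I_{\boldsymbol{\upeta}}(\tilde{\bm R}_X)\le I^\star$ with $I^\star$ the optimal value of (\ref{eq10}), and read off sufficiency, the deterministic trace, and the deterministic-$\mathbf{R}_X$ conclusion from the equality conditions; on the ``if'' direction and the bound you are fully aligned with the paper. Where you genuinely add something is the ``only if'' direction: you correctly note that Jensen equality for a merely concave $I_{\boldsymbol{\upeta}}$ only confines the support to the set where $I_{\boldsymbol{\upeta}}$ agrees with a supporting affine functional at $\tilde{\bm R}_X$, and your repair via the concave value function $h(P)=\max\{I_{\boldsymbol{\upeta}}(\bm R):\bm R\succeq\mathbf{0},\ \operatorname{tr}(\bm R)=P\}$, chained as $\mathbb{E}\{I_{\boldsymbol{\upeta}}(\mathbf{R}_X)\}\le\mathbb{E}\{h(\operatorname{tr}\mathbf{R}_X)\}\le h(P_T)=I^\star$, is sound --- but it establishes necessity only under the additional hypothesis that $h$ is strictly concave (not affine) around $P_T$, which is not part of the proposition. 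This is the right thing to flag: without some such non-degeneracy the literal ``only if'' can fail (e.g., if $I_{\boldsymbol{\upeta}}$ is constant, or affine along the power direction through $P_T$, the MI is also maximized by signals whose trace fluctuates about $P_T$ and whose support therefore lies outside the solution set of (\ref{eq10})). In the Gaussian settings of Sec.~V the condition does hold ($h$ is a strictly concave water-filling value), so with your strict-concavity assumption stated explicitly as a standing non-degeneracy condition, your proof is complete and, on the necessity side, more explicit than the paper's citation-only argument; one minor omission is that ``maximized'' also requires attainability of $I^\star$, i.e., existence of $\mathbf{X}$ with $T^{-1}\mathbf{X}\mathbf{X}^H$ in the solution set (guaranteed when $T\ge M$), which is worth a sentence.
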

\begin{proof}
Since the objective function is a concave function in ${\bm{R}}_A$, the proof is a straightforward modification of \cite[Proposition 3]{xiong2022fundamental}, which is omitted here for brevity.
\end{proof}
We then show that the following theorem provides a universal bound for the distortion of recovering $\boldsymbol{\upeta}$ from $\mathbf{Y}_s$.
\begin{thm}
(Distortion Lower Bound) Let $D\left(R\right)$ be the distortion-rate function for the to-be-sensed i.i.d. random parameter $\boldsymbol{\upeta}\sim p_{\boldsymbol{{\upeta}}}\left(\bm{\eta}\right)$, $\hat{\boldsymbol{\upeta}}$ an estimate of $\boldsymbol{\upeta}$, and $d\left(\boldsymbol{\upeta},\hat{\boldsymbol{\upeta}}\right)$ the corresponding distortion function measuring the sensing performance. The average distortion of recovering $\boldsymbol{\upeta}$ from the noisy observation $\mathbf{Y}_s$ is lower-bounded by 
\begin{equation}\label{eq11}
\mathbb{E}\left\{ {d\left(\boldsymbol{\upeta},\hat{\boldsymbol{\upeta}}\right)} \right\} \mathop  \ge\limits^{\left( a \right)} D\left[ {\mathbb{E}\left\{ {{I_{\boldsymbol{\upeta}} }\left( {{{\mathbf{R}}_X}} \right)} \right\}} \right] \mathop  \ge\limits^{\left( b \right)} D\left( { {{I_{\boldsymbol{\upeta}} }\left( {{{{\bm{\tilde R}}}_X}} \right)} } \right),
\end{equation}
where the equality holds for $\left(b\right)$ if $\mathbf{R}_X$ satisfies the conditions in Proposition 1. In particular, if the solution of (\ref{eq10}) is unique, then $\mathbf{R}_X$ itself should be deterministic to achieve $\left(b\right)$.
\end{thm}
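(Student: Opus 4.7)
The plan is to establish inequality (a) via the data processing inequality combined with the converse of rate-distortion theory, and inequality (b) via Jensen's inequality on the concave map provided by Lemma~2.

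For (a), I would first note that $\mathbf{X}$ is independent of $\boldsymbol{\upeta}$ (the transmitter draws $\mathbf{X}$ from a communication codebook unrelated to the target parameters) and that any sensing estimator $\hat{\boldsymbol{\upeta}}$ is a function of $(\mathbf{Y}_s,\mathbf{X})$, so the Markov chain $\boldsymbol{\upeta}\to(\mathbf{Y}_s,\mathbf{X})\to\hat{\boldsymbol{\upeta}}$ holds. The data processing inequality gives $I(\boldsymbol{\upeta};\hat{\boldsymbol{\upeta}})\le I(\boldsymbol{\upeta};\mathbf{Y}_s,\mathbf{X})$, and the chain rule together with the independence of $\mathbf{X}$ and $\boldsymbol{\upeta}$ collapses the right-hand side to $I(\boldsymbol{\upeta};\mathbf{Y}_s\mid\mathbf{X})$. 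The rate-distortion converse then delivers $\mathbb{E}\{d(\boldsymbol{\upeta},\hat{\boldsymbol{\upeta}})\}\ge D(I(\boldsymbol{\upeta};\hat{\boldsymbol{\upeta}}))$, and invoking the non-increasing monotonicity of $D(\cdot)$ pushes this further to $D(I(\boldsymbol{\upeta};\mathbf{Y}_s\mid\mathbf{X}))$. Finally, Lemma~2 (together with the tower property) rewrites the conditional MI as $\mathbb{E}_{\mathbf{X}}\{I_{\boldsymbol{\upeta}}(\mathbf{R}_X)\}$, which is exactly (a).

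For (b), I would invoke the concavity of $I_{\boldsymbol{\upeta}}(\cdot)$ from Lemma~2. Jensen's inequality yields $\mathbb{E}\{I_{\boldsymbol{\upeta}}(\mathbf{R}_X)\}\le I_{\boldsymbol{\upeta}}(\mathbb{E}\{\mathbf{R}_X\})=I_{\boldsymbol{\upeta}}(\tilde{\mathbf{R}}_X)$, and a second application of the monotonicity of $D(\cdot)$ gives (b). For the tightness claim, I would split into two cases. If problem (\ref{eq10}) admits a unique maximizer $\mathbf{R}^\star$, then Proposition~1 forces $\mathbf{R}_X=\mathbf{R}^\star$ almost surely, so both sides of Jensen's bound collapse to $I_{\boldsymbol{\upeta}}(\mathbf{R}^\star)$ trivially. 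Otherwise, the maximizer set is a convex subset of $\{\mathbf{R}_A\succeq\mathbf{0},\,\operatorname{tr}(\mathbf{R}_A)=P_T\}$, and since a concave function attaining its maximum on a convex set must be constant on that set, $I_{\boldsymbol{\upeta}}(\cdot)$ is constant on the maximizer set; as Proposition~1 requires $\mathbf{R}_X$ to be supported precisely there, both $\mathbb{E}\{I_{\boldsymbol{\upeta}}(\mathbf{R}_X)\}$ and $I_{\boldsymbol{\upeta}}(\tilde{\mathbf{R}}_X)$ (with $\tilde{\mathbf{R}}_X$ remaining in the set by linearity of expectation and convexity) equal the common maximum, so $D$ preserves the equality.

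The main obstacle I anticipate is the non-unique-maximizer case in (b): one has to verify both that any concave maximizer set is flat for $I_{\boldsymbol{\upeta}}$ (a short midpoint argument suffices) and that $\tilde{\mathbf{R}}_X$ itself remains in that convex set so that $I_{\boldsymbol{\upeta}}(\tilde{\mathbf{R}}_X)$ attains the same maximum value. The remaining ingredients — the DPI, the rate-distortion converse, monotonicity of $D$, and the tower rewriting of the conditional MI — are essentially bookkeeping.
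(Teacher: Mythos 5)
Your proposal is correct and follows essentially the same route as the paper's proof: data processing inequality to bound $I(\boldsymbol{\upeta};\hat{\boldsymbol{\upeta}})$ by $I(\boldsymbol{\upeta};\mathbf{Y}_s\mid\mathbf{X})$, the single-letter rate-distortion converse with monotonicity of $D(\cdot)$, and Jensen's inequality on the concave map $I_{\boldsymbol{\upeta}}(\cdot)$ from Lemma~2. The only cosmetic differences are that you apply the DPI directly to $\boldsymbol{\upeta}\to(\mathbf{Y}_s,\mathbf{X})\to\hat{\boldsymbol{\upeta}}$ using the independence of $\boldsymbol{\upeta}$ and $\mathbf{X}$ (the paper routes through $\mathbf{H}_s$ and Lemma~1), and you spell out the tightness of the Jensen step in the non-unique-maximizer case somewhat more explicitly than the paper, which simply invokes Proposition~1.
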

\begin{proof}
See Appendix C.
\end{proof}
When the sensing-optimal sample covariance matrix is unique, i.e., when (\ref{eq10}) has a unique solution $\bm{R}_s^\star$, the communication capacity will be reduced due to the extra constraint, or, equivalently, the loss of DoFs in the ISAC signal. Theorem 2 provides the high-SNR asymptotic egordic communication capacity under a fixed sample covariance matrix.
\begin{thm}
(Sensing-Limited High-SNR Ergodic Capacity) Suppose that problem (\ref{eq10}) has a unique solution ${{\bm{R}}_s^ \star } $. In the high-SNR regime, namely, when $P_T/\sigma_c^2\to \infty$, the rate $I_c$ can be expressed as
\begin{equation}\label{high_SNR_capacity}
\begin{gathered}
  {I_c} = \mathop {\max }\limits_{{p_{\mathbf{X}}}\left( \bm{X} \right)} \;{T^{ - 1}}I\left( {{{\mathbf{Y}}_c};{\mathbf{X}}\left| {{{\mathbf{H}}_c}} \right.} \right),\;\;\operatorname{s.t.}\;\;{T^{ - 1}}{\mathbf{X}}{{\mathbf{X}}^H} = {{\bm{R}}_s^ \star } \hfill \\
  \;\;\;\; = \mathbb{E}\left\{ {\left( {1 - \frac{L}{{2T}}} \right)\log \left| {\sigma _c^{ - 2}{{\mathbf{H}}_c}{{\bm{R}}_s^ \star }{\mathbf{H}}_c^H} \right| + {c_0}} \right\} + O\left( {\sigma _c^2} \right), \hfill \\ 
\end{gathered}
\end{equation}
where $L = \operatorname{rank}\left({{\mathbf{H}}_c}{{\bm{R}}_s^ \star }{\mathbf{H}}_c^H\right)$, and the term
\begin{equation}
{c_0} = \frac{L}{T}\left[ {\left( {T - \frac{L}{2}} \right)\log \frac{T}{e} - \log \Gamma \left( T \right) + \log 2\sqrt \pi  } \right]
\end{equation}
converges to zero as $T\to\infty$, where $\Gamma\left(\cdot\right)$ is the Gamma function.
\end{thm}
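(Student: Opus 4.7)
The plan is to establish the two equalities in (\ref{high_SNR_capacity}) separately. The first equality is a direct consequence of Proposition~1 and Theorem~1: when (\ref{eq10}) has a unique solution $\bm{R}_s^{\star}$, the distortion lower bound in Theorem~1 is achieved with equality only by input distributions for which $\mathbf{R}_X = T^{-1}\mathbf{X}\mathbf{X}^{H}$ is almost surely equal to $\bm{R}_s^{\star}$. Thus the feasible set $\mathcal{F}$ of $p_{\mathbf{X}}$ in (\ref{eq2}) collapses to the set of distributions supported on the manifold $\{\mathbf{X}\in\mathbb{C}^{M\times T}:T^{-1}\mathbf{X}\mathbf{X}^{H}=\bm{R}_s^{\star}\}$, which gives the constrained maximization in the first line of (\ref{high_SNR_capacity}).

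For the second equality, I would first reduce the constrained channel to a Stiefel-manifold signaling problem. Writing $\bm{R}_s^{\star}=\mathbf{F}\mathbf{F}^{H}$ with $\mathbf{F}\in\mathbb{C}^{M\times L}$ of full column rank (where $L=\operatorname{rank}\bm{R}_s^{\star}=\operatorname{rank}(\mathbf{H}_c\bm{R}_s^{\star}\mathbf{H}_c^{H})$ for generic $\mathbf{H}_c$), every feasible $\mathbf{X}$ admits a factorization $\mathbf{X}=\mathbf{F}\mathbf{U}$ with $\mathbf{U}\mathbf{U}^{H}=T\mathbf{I}_L$, i.e., $T^{-1/2}\mathbf{U}^{H}$ lies on the complex Stiefel manifold $V_L(\mathbb{C}^{T})$. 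Letting $\mathbf{G}:=\mathbf{H}_c\mathbf{F}$, the channel becomes $\mathbf{Y}_c=\mathbf{G}\mathbf{U}+\mathbf{Z}_c$; projecting the observation onto the range of $\mathbf{G}$ and its orthogonal complement, the noise-only component is independent of $\mathbf{U}$ and contributes nothing to $I(\mathbf{Y}_c;\mathbf{X}\mid\mathbf{H}_c)$, so it suffices to analyze the reduced $L\times T$ model and to compute $T^{-1}[h(\mathbf{Y}_c\mid\mathbf{H}_c)-h(\mathbf{Z}_c)]$.

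The asymptotic analysis then proceeds in two steps. First, I would argue that the isotropic (left-$U(L)$, right-$U(T)$-invariant) distribution on the Stiefel manifold is capacity-achieving to within an $O(\sigma_c^{2})$ correction as $P_T/\sigma_c^{2}\to\infty$. The standard argument, following the Marzetta--Hochwald/Zheng--Tse line applied to a \emph{coherent} MIMO channel, combines an upper bound via the vector entropy power inequality with a matching lower bound produced by evaluating $h(\mathbf{G}\mathbf{U}\mid\mathbf{H}_c)$ for a uniform $\mathbf{U}$. Second, I would compute this entropy explicitly by change of variables from the Stiefel manifold to $\mathbb{C}^{L\times T}$: pushing the uniform measure through the map $\mathbf{U}\mapsto\mathbf{G}\mathbf{U}$ introduces the log-volume $\log\operatorname{Vol}V_L(\mathbb{C}^{T})$ (which is a classical Gamma-function expression, producing the $-\log\Gamma(T)$ and $\log 2\sqrt{\pi}$ terms in $c_0$) together with a Jacobian of $\log|\mathbf{G}\mathbf{G}^{H}|=\log|\mathbf{H}_c\bm{R}_s^{\star}\mathbf{H}_c^{H}|$, while the $L^{2}$-dimensional Stiefel constraint supplies exactly the $-L^{2}/(2T)\log(1/\sigma_c^{2})$ shortfall that turns the naive pre-log $L$ into the reduced factor $L(1-L/(2T))$. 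Subtracting the Gaussian noise entropy $h(\mathbf{Z}_c)=N_c T\log(\pi e\sigma_c^{2})$, collecting the constants into $c_0$, and finally taking the expectation over $\mathbf{H}_c$ yields (\ref{high_SNR_capacity}); Stirling's formula applied to $\log\Gamma(T)$ confirms $c_0\to 0$.

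The main obstacle I anticipate is the two-sided asymptotic-optimality argument for isotropic Stiefel signaling in the coherent setting: the upper bound requires a careful application of the maximum-entropy principle subject to the hard sample-covariance constraint (rather than the usual trace constraint), and the lower bound requires that the deviation between $h(\mathbf{G}\mathbf{U}+\mathbf{Z}_c\mid\mathbf{H}_c)$ and $h(\mathbf{G}\mathbf{U}\mid\mathbf{H}_c)$ be controlled uniformly at $O(\sigma_c^{2})$, which in turn requires some regularity of the pushforward density near the singular locus of $\mathbf{G}\mathbf{U}$. Once this is in place, the remaining manipulations---the Jacobian computation, the Stiefel volume formula, and the collection of constants---are essentially bookkeeping.
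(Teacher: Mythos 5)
The paper does not prove this theorem in-house: its stated proof is a pointer to \cite[Theorem 1]{xiong2022fundamental}, and your sketch follows essentially that route --- collapse the feasible inputs to the manifold $\{\mathbf{X}: T^{-1}\mathbf{X}\mathbf{X}^H = \bm{R}_s^\star\}$ (your use of Proposition 1/Theorem 1 for the first equality is fine), factor $\mathbf{X}=\mathbf{F}\mathbf{U}$ with $\mathbf{U}\mathbf{U}^H=T\mathbf{I}_L$, signal isotropically on the complex Stiefel manifold, and read off the pre-log $L(1-L/(2T))$ from the dimension count $2TL-L^2$ together with the Stiefel volume (the Gamma terms in $c_0$) and Stirling for $c_0\to 0$. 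So the overall architecture is the right one.

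The one step that would fail as you currently propose it is the converse. A maximum-entropy/EPI bound under the (output) covariance induced by the hard constraint gives $h(\mathbf{Y}_c\,|\,\mathbf{H}_c)\le T\log\left|\pi e\left(\sigma_c^2\mathbf{I}+\mathbf{H}_c\bm{R}_s^\star\mathbf{H}_c^H\right)\right|$, whose per-symbol pre-log is $L$; this exceeds the claimed rate by $\frac{L^2}{2T}\log\sigma_c^{-2}$ and therefore cannot match your Stiefel lower bound to within $O(\sigma_c^2)$, no matter how carefully the trace constraint is replaced by the hard one. The obstacle you flag dissolves by symmetry rather than by a sharper entropy inequality: the constraint set is a single orbit of the right $U(T)$ action $\mathbf{X}\mapsto\mathbf{X}\mathbf{Q}$, the noise law is invariant under this action, and $I(\mathbf{Y}_c;\mathbf{X}\,|\,\mathbf{H}_c)=h(\mathbf{Y}_c\,|\,\mathbf{H}_c)-h(\mathbf{Z}_c)$ with $h(\mathbf{Y}_c\,|\,\mathbf{H}_c)$ concave in $p_{\mathbf{X}}$, so Haar-averaging any feasible input over $\mathbf{Q}$ cannot decrease the rate and produces the uniform distribution on the Stiefel manifold, which is therefore \emph{exactly} optimal at every SNR and for every realization of $\mathbf{H}_c$; no two-sided asymptotic sandwich in the Marzetta--Hochwald/Zheng--Tse style is needed, and the remaining task is only the high-SNR expansion of $h(\mathbf{G}\mathbf{U}+\mathbf{Z}_c)$ for Haar $\mathbf{U}$. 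One bookkeeping caution there: the Jacobian of $\mathbf{U}\mapsto\mathbf{G}\mathbf{U}$ restricted to the tangent spaces enters with exponent $T-L/2$, which is what places the factor $\left(1-\frac{L}{2T}\right)$ in front of the entire $\log\left|\sigma_c^{-2}\mathbf{H}_c\bm{R}_s^\star\mathbf{H}_c^H\right|$ and not only in front of its $\sigma_c^{-2}$ part, as your sketch currently suggests.
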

\begin{proof}
See the proof of \cite[Theorem 1]{xiong2022fundamental}.
\end{proof}
We next discuss the implications of the above main results.

\section{Discussions}
\subsection{Wireless Sensing as Non-Cooperative Joint Source-Channel Coding}
It is well-known that the communication MI $I\left( {{{\mathbf{Y}}_c};{\mathbf{X}} \left| {\mathbf{H}}_c \right.} \right)$ has an explicit operational meaning, i.e., achievable data rate. Nevertheless, it remains unclear what the operational meaning is for the sensing MI $I\left( {{{\mathbf{Y}}_s};\boldsymbol{\upeta} \left| {\mathbf{X}} \right.} \right)$, as it does not seem to indicate any ``coding" rate in wireless sensing systems.

The main results in Sec. III offer an interesting angle to look at the above issue by interpreting wireless sensing from an information-theoretic viewpoint. That is, one may consider wireless sensing as a procedure of {\textit{non-cooperative joint source-channel coding}}, where the target encodes the information of the parameter $\boldsymbol{\upeta}$, and communicates it to the sensing Rx in a passive and non-cooperative manner. In such a case, the random parameter $\boldsymbol{\upeta}$ is regarded as a memoryless source, $\mathbf{H}_s\left(\boldsymbol{\upeta}\right)$, the channel input, is a letter of the channel codeword, and $\mathbf{Y}_s$ is the channel output, from which a distorted version of $\boldsymbol{\upeta}$ (i.e., $\hat{\boldsymbol{\upeta}}$) may be revealed at the sensing Rx. The rate-distortion function $R\left(D\right)$ characterizes the minimum number of bits required to communicate $\boldsymbol{\upeta}$ to the sensing Rx at an allowable distortion $D$.

More interestingly, the ISAC signal $\mathbf{X}$ serves as the ``channel matrix" for communicating the source $\boldsymbol{\upeta}$. In particular, as we model $\mathbf{X}$ as a random variable, it can be treated as an ``ergodic channel" with channel state information at the sensing Rx (CSIR), in the sense that it varies in an i.i.d. manner every $T$ symbols, and is perfectly known to the sensing Rx. Accordingly, the sensing MI $I\left( {{{\mathbf{Y}}_s};\boldsymbol{\upeta} \left| {\mathbf{X}} \right.} \right)$ becomes an ``ergodic rate" of the channel, in the sense that it bounds the rate-distortion function $R\left(D\right)$, and thereby yields a lower-bound for the average distortion of any well-defined distortion functions. Indeed, the proof of Theorem 1 is similar to the proof of converse for the source-channel separation theorem under ergodic channels, where $R\left(D\right)$ is only determined by the distribution of the target (the source $p_{\boldsymbol{{\upeta}}}\left(\bm{\eta}\right)$), and the maximum sensing MI relies solely on the statistical model (the channel ${p_{\left. {\mathbf{Y}}_s \right|\mathbf{H}_s }}\left( {{\bm{Y}_s}\left| {\bm{H}_s}  \right.} \right)$) \cite{IT_Polyanskiy_Wu}. Nevertheless, it is unlikely to prove the general achievability of such a bound. This is because the separation theorem requires a {\textit{block-wise}} coding strategy, which encodes a sequence of i.i.d. $\boldsymbol{\upeta}$, i.e., $\boldsymbol{\upeta}^k = \left(\boldsymbol{\upeta}_1, \boldsymbol{\upeta}_2, \ldots, \boldsymbol{\upeta}_k\right)$ by a channel codeword $\mathbf{H}_s^n = \left(\mathbf{H}_{s,1}, \mathbf{H}_{s,2}, \ldots, \mathbf{H}_{s,n}\right)$. In our case, the target as a non-cooperative information source does not have such a block-wise ``coding" capability. The only thing it can do is to map each source letter $\boldsymbol{\upeta}$ to a channel code letter $\mathbf{H}_s\left(\boldsymbol{\upeta}\right)$, which is essentially a {\textit{letter-wise}} coding strategy that does not possess the general optimality.
\begin{figure}[!t]
    \centering
    \includegraphics[width=\columnwidth]{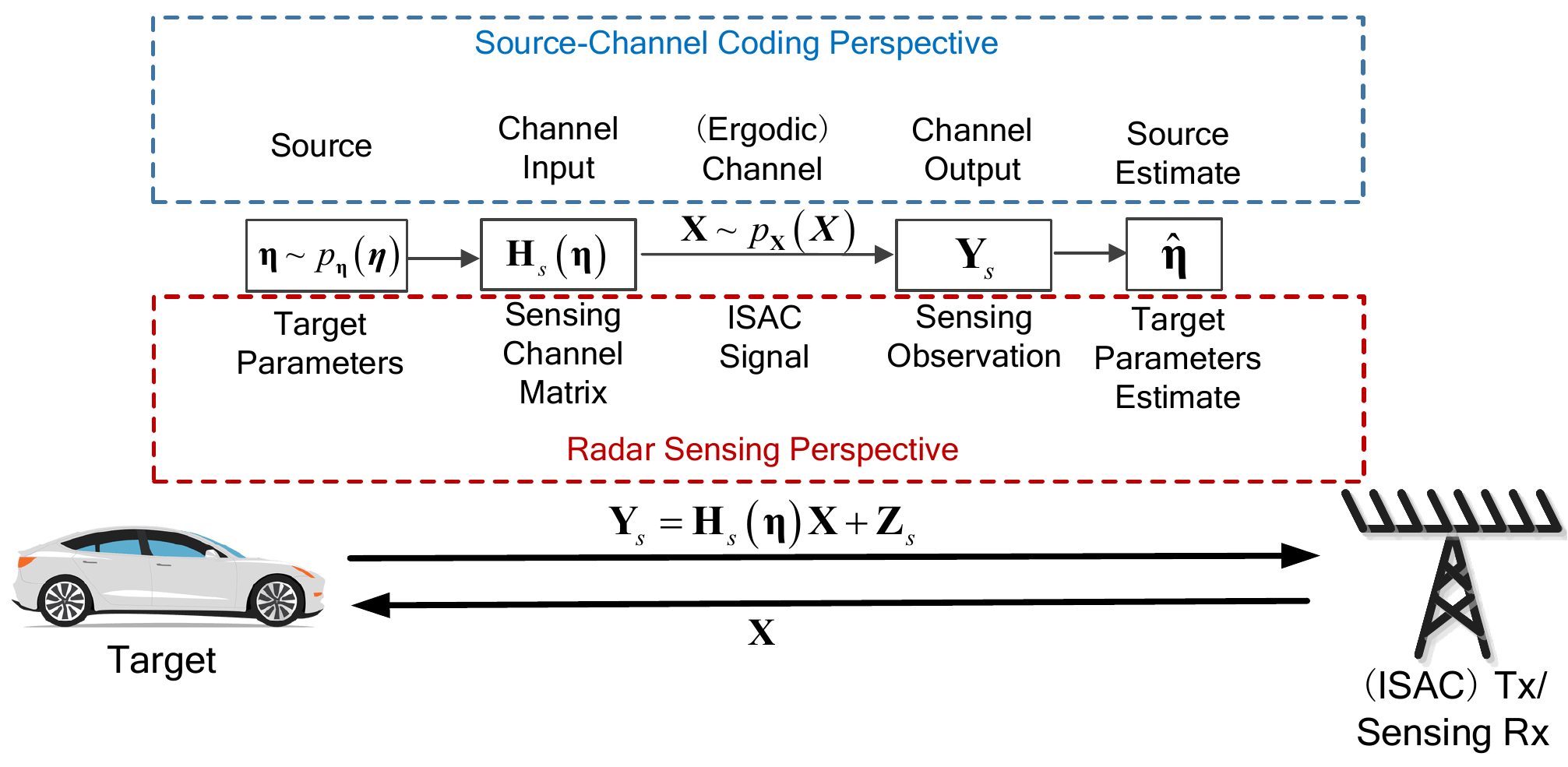}
    \caption{Wireless sensing as non-cooperative joint source-channel coding.}
    \label{fig:2}
\end{figure}

\subsection{Deterministic-Random Tradeoff in ISAC Systems}
Since the establishment of the Shannon theory, it has been well confirmed by both the academia and engineering practice that communication signals should be ``as random as possible" to convey information. In contrast to that, sensing (radar) systems favor deterministic signals to achieve a stable estimation/detection performance. One example is that in order to maximize the SINR of the target return, radar typically emits high-power constant-modulus signals to overcome the nonlinear amplifier distortion, where only the signal phases are allowed to vary. 

The lessons learned from Theorems 1 are that, ISAC signals should be deterministic to an extent to reach the optimal sensing MI (and thereby the optimal distortion lower bound), in the sense that the support of the sample covariance matrix $\mathbf{R}_X$ should be restricted to the optimal solution set of (\ref{eq10}), or even be deterministic itself. Under such a constraint, as indicated by Theorem 2, the resulting achievable communication rate is strictly less than the Gaussian capacity due to the loss of the pre-log DoFs. In fact, it was recently proved that the high-SNR capacity (\ref{high_SNR_capacity}) is attained by the uniform distribution over Stefiel manifold, which is no longer Gaussian signaling anymore \cite{xiong2022flowing}. In other words, the ISAC system trades off the randomness in the signal (and thereby the communication performance) for achieving a better sensing performance, which is referred to as deterministic-random tradeoff (DRT) between S\&C functionalities in the ISAC system \cite{xiong2022fundamental}. 

\begin{figure}[!t]
    \centering
    \includegraphics[width=0.6\columnwidth]{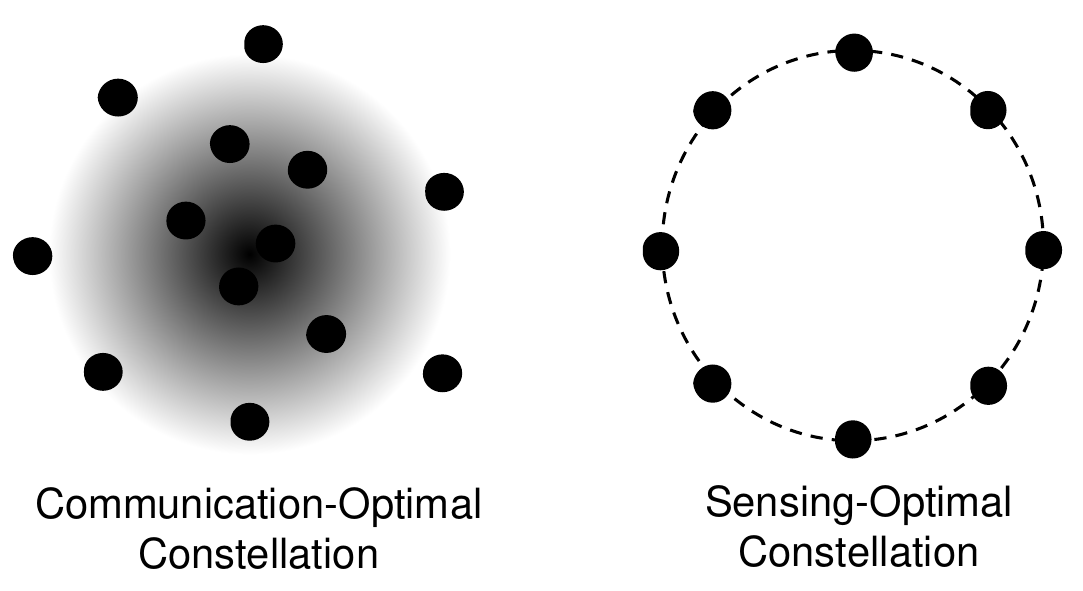}
    \caption{Determinitstic-Random Tradeoff in ISAC systems for scalar signals.}
    \label{fig:3}
\end{figure}
More remarkably, the lower bound in (\ref{eq11}) does not require a specific distortion function for sensing, which implies that the DRT may hold for a variety of sensing metrics, including the MSE, detection probability, and negative sensing MI itself. Note that the proof does not need any analytical expressions of distortion metrics, which is a generalization of the results in \cite{xiong2022fundamental} relying on the convexity of the closed-form expression of CRB. Despite the fact that the achievability conditions of (\ref{eq11}) remain unexplored, Theorem 1 provides strong evidence for the generic correctness of the DRT in ISAC systems.


\section{Examples}
In this section, we analyze an example of target response matrix (TRM) estimation, where we have $\boldsymbol{\upeta} = \operatorname{vec}\left({\mathbf{H}_s}\right) = \mathbf{h}_s$. To explicitly characterize the sensing MI, we also assume that the target parameters are zero-mean circular-symmetric Gaussian distributed with an invertible statistical covariance matrix $\bm{\Tilde{R}}_h$, i.e., $\mathbf{h}_s \sim \mathcal{CN}\left(\mathbf{0}, \bm{\Tilde{R}}_h\right)$. Such a model has been widely applied to extended target estimation of MIMO radar systems. We will also show that, under certain conditions, the lower bound in (\ref{eq11}) is achievable.
\subsection{Scalar Case}
We commence by examining the scalar case, namely, $M = 1, T = 1$, where the S\&C signals reduce to
\begin{equation}\label{eq18}
\begin{gathered}
  {{{Y}}_s} = {{H}_s}{X} + {{Z}_s}, \hfill \\
  {Y_c} = {H_c}X + {Z_c}, \hfill \\ 
\end{gathered}
\end{equation}
where $H_s \sim \mathcal{CN}\left(0, \sigma^2_h\right)$. The sensing MI is given by
\begin{equation}\label{eq19}
\begin{gathered}
 I_s = I\left( {{Y_s};{H_s}\left|X \right.} \right) = \mathbb{E}\left\{ {\log \left( {1 + \frac{{{{\left| X\right|}^2}\sigma _h^2}}{{\sigma _s^2}}} \right)} \right\} \hfill \\
 \le \log \left( {1 + \frac{{\mathbb{E}\left\{ {{{\left|X\right|}^2}} \right\}\sigma _h^2}}{{\sigma _s^2}}} \right) = \log \left( {1 + \frac{{ P_T \sigma _h^2}}{{\sigma _s^2}}} \right) \triangleq I_{s,\max},\hfill \\ 
\end{gathered}
\end{equation}

Let us denote the estimate of $H_s$ as ${\hat H}_s$. By leveraging the squared Euclidean distance distortion, the sensing performance is measured by the MSE. The following proposition provides the achievability condition of the MSE lower bound.
\begin{proposition}
The scalar-case sensing MSE is bounded by
\begin{equation}
    \mathbb{E}\left\{ {{{\left| {{H_s} - {{\hat H}_s}} \right|}^2}} \right\} \ge \sigma _h^2{2^{ - {I_{s,\max }}}}. 
\end{equation}
The equality holds if and only if the MMSE estimator is employed at the sensing receiver, and that ${\left|X \right|^2} = \mathbb{E}\left\{{{{\left|X \right|}^2}} \right\} = {P_T}$, i.e., $X$ has a constant amplitude $\sqrt{P_T}$.
\end{proposition}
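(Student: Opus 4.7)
The plan is to specialize Theorem 1 to the scalar complex-Gaussian case and then inspect the two inequalities in its chain. For a scalar circularly-symmetric complex Gaussian source $H_s\sim\mathcal{CN}(0,\sigma_h^2)$ under squared-error distortion, the classical rate-distortion function is $R(D)=\log(\sigma_h^2/D)$, i.e.\ $D(R)=\sigma_h^2\,2^{-R}$. In the scalar setting, $\tilde R_X=\mathbb{E}\{|X|^2\}=P_T$ and $I_{\boldsymbol{\upeta}}(\tilde R_X)=\log(1+P_T\sigma_h^2/\sigma_s^2)=I_{s,\max}$, so the chain in (\ref{eq11}) immediately delivers
\[
\mathbb{E}\{|H_s-\hat H_s|^2\}\;\ge\;D(I_{s,\max})=\sigma_h^2\,2^{-I_{s,\max}},
\]
which is the stated bound.

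For the equality characterization I would isolate the two inequalities in (\ref{eq11}). Equality in ($b$) is handled by Proposition 1: the scalar version of problem (\ref{eq10}) maximizes $\log(1+r\sigma_h^2/\sigma_s^2)$ subject to $r=P_T$ and has the unique optimum $r=P_T$, so sharpness in ($b$) forces $|X|^2=P_T$ almost surely, which is exactly the constant-amplitude condition. Equality in ($a$), the rate-distortion inequality, requires the estimator to achieve the Shannon bound. I would verify this by a direct calculation: conditioning on $X$, the Bayesian estimator is $\hat H_s^{\mathrm{MMSE}}=\bigl(X^*\sigma_h^2/(|X|^2\sigma_h^2+\sigma_s^2)\bigr)\,Y_s$ with conditional MSE $\sigma_h^2/(1+|X|^2\sigma_h^2/\sigma_s^2)$; specializing to $|X|^2=P_T$ yields exactly $\sigma_h^2\,2^{-I_{s,\max}}$, matching the rate-distortion bound --- a well-known consequence of the ``Gaussian source through Gaussian channel is matched'' fact. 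The ``only if'' direction on the estimator choice follows from uniqueness of the conditional expectation as the MSE minimizer (orthogonality principle).

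The main obstacle is verifying that ($a$) is tight in this scalar Gaussian instance, since Theorem 1 was established only as a lower bound without a general achievability claim. This is where the direct MMSE computation above becomes indispensable. Alternatively, one can sidestep Theorem 1 entirely and present a self-contained derivation: for any estimator, conditioning on $X$ gives $\mathbb{E}\{|H_s-\hat H_s|^2\}\ge\mathbb{E}_X\{\sigma_h^2/(1+|X|^2\sigma_h^2/\sigma_s^2)\}$ with equality exactly under the MMSE choice, and then a single application of Jensen's inequality to the strictly convex map $u\mapsto 1/(1+u)$ yields the bound together with the constant-amplitude equality condition in one stroke. I would likely present the direct route as the primary argument and use Theorem 1 merely to connect the proposition back to the general framework.
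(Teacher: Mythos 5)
Your proposal is correct and takes essentially the same approach as the paper's Appendix D: both compute the conditional MMSE $\sigma_h^2/(1+|X|^2\sigma_h^2/\sigma_s^2)$, average over $X$, and apply Jensen's inequality (you apply it directly to the convex map $u\mapsto 1/(1+u)$, the paper equivalently to $-\log$ via $R_{G}\left[\operatorname{mmse}(H_s)\right]\le I\left(Y_s;H_s|X\right)$) to obtain the bound $\sigma_h^2 2^{-I_{s,\max}}$ together with the equality conditions of an MMSE estimator and constant amplitude $|X|^2=P_T$. Your use of Theorem 1 to frame the lower bound mirrors how the paper ties the proposition to its general rate-distortion framework, so the two arguments coincide in substance.
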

\begin{proof}
    See Appendix D.
\end{proof}
\textit{Remark 1:} When the above lower bound is achieved, the ISAC system may only employ PSK constellations to convey information as shown in Fig. 3, leading to an achievable communication rate strictly lower than that of a Gaussian distributed $X$.

\textit{Remark 2:} The sensing model in (\ref{eq18}) is nothing but an uncoded transmission scheme of the Gaussian source $H_s$. For non-fading Gaussian channels, uncoded transmission is known to be optimal with MSE as the distortion. Nevertheless, in the ISAC system, $X$ is an i.i.d. random variable known to the sensing Rx, which makes (\ref{eq18}) essentially an ergodic channel with CSIR. In such a case, the capacity-achieving scheme is \textit{coding across blocks}, which cannot be realized due to the fact that the source is a non-cooperative target, as discussed in Sec. IV. Therefore, uncoded transmission is no longer optimal, resulting in the inequality $\left(a\right)$ in (\ref{eq23}). When the amplitude of $X$ is fixed to $\sqrt{P_T}$, the ergodic rate $I\left( {{Y_s};{H_s}\left|X \right.} \right)$ is maximized and equal to that of the non-fading channel, which makes uncoded transmission optimal again.
\subsection{Vector Case}
By noting $\boldsymbol{\upeta} = \operatorname{vec}\left({\mathbf{H}_s}\right) = \mathbf{h}_s$, we vectorize the sensing signal model as
\begin{equation}\label{eq25}
{{\mathbf{y}}_s} = \operatorname{vec} \left( {{{\mathbf{Y}}_s}} \right) = \left( {{{\mathbf{X}}^T} \otimes {{\mathbf{I}}_{{N_s}}}} \right){{\mathbf{h}}_s} + {{\mathbf{z}}_s} \triangleq  \mathbf{\tilde{X}}{{\mathbf{h}}_s} + {{\mathbf{z}}_s}.
\end{equation}
The sensing MI can be expressed as
\begin{equation}\label{eq26}
\begin{gathered}
  I\left( {{{\mathbf{y}}_s};{{\mathbf{h}}_s}\left| {{\mathbf{\tilde X}}} \right.} \right) 
   = \mathbb{E}\left\{ {\log\left| {{\mathbf{I}} + \sigma _s^{ - 2}{\mathbf{\tilde X}}{{\bm{\tilde{R}}}_h}{{{\mathbf{\tilde X}}}^H}} \right|} \right\} \hfill \\
   = \mathbb{E}\left\{ {\log\left| {{\mathbf{I}} + \sigma _s^{ - 2}{{\mathbf{\Lambda }}_h}{{\mathbf{U}}^H}\left( {{{\mathbf{X}}^*}{{\mathbf{X}}^T} \otimes {{\mathbf{I}}_{{N_s}}}} \right){\mathbf{U}}} \right|} \right\} \hfill \\
   = \mathbb{E}\left\{ {\log \left| {{\mathbf{I}} + \sigma _s^{ - 2}{{\bm{\Lambda }}_h}{{\mathbf{U}}^H}{\mathbf{K}}\left( {{{\mathbf{I}}_{{N_s}}} \otimes {{\mathbf{X}}^*}{{\mathbf{X}}^T}} \right){{\mathbf{K}}^T}{\mathbf{U}}} \right|} \right\}, \hfill \\ 
\end{gathered}
\end{equation}
where ${{\bm{\tilde{R}}}_h} = {\mathbf{U}}{{\bm{\Lambda }}_h}\mathbf{U}^H$ is the eigenvalue decomposition of ${{\bm{\tilde{R}}}_h}$, and $\mathbf{K}$ is a real commutation matrix satisfying $\mathbf{K}\mathbf{K}^T = \mathbf{K}^T\mathbf{K} = \mathbf{I}$, such that ${\mathbf{K}}\left( {{{\mathbf{I}}_{{N_s}}} \otimes {{\mathbf{X}}^*}{{\mathbf{X}}^T}} \right){{\mathbf{K}}^T} = {{{\mathbf{X}}^*}{{\mathbf{X}}^T} \otimes {{\mathbf{I}}_{{N_s}}}}$. By letting $\mathbf{\tilde F} = {{\mathbf{U}}^H}{\mathbf{K}} = \left[ {{\mathbf{\tilde F}_1},{\mathbf{\tilde F}_2}, \ldots ,{\mathbf{\tilde F}_{{N_s}}}} \right]$, and $\mathbf{F}_i = {{\mathbf{\tilde F}}_i^{*}}$, (\ref{eq26}) can be recast as
\begin{equation}\label{eq27}
\begin{gathered}
  I\left( {{{\mathbf{Y}}_s};{{\mathbf{H}}_s}\left| {\mathbf{X}} \right.} \right) =  
  \mathbb{E}\left\{ {\log\left| {{\mathbf{I}} + \sigma _s^{ - 2}T{{\mathbf{\Lambda }}_h}\sum\nolimits_{i = 1}^{{N_s}} {{{\mathbf{\tilde F}}_i}{\mathbf{R}}_X^*{\mathbf{\tilde F}}_i^H} } \right|} \right\} \hfill \\ 
   = \mathbb{E}\left\{ {\log  \left|{{\mathbf{I}} + \sigma _s^{ - 2}T{{\mathbf{\Lambda }}_h}\sum\nolimits_{i = 1}^{{N_s}} {{{\mathbf{F}}_i}{{\mathbf{R}}_X}{\mathbf{F}}_i^H} } \right|} \right\}. \hfill \\
\end{gathered}
\end{equation}
\begin{proposition}
The vector-case sensing MSE is bounded by
\begin{equation}
 \mathbb{E}\left\{ {{{\left\| {{\mathbf{H}_s} - {\mathbf{\hat H}_s}} \right\|}_F^2}} \right\} \ge {D_{VG}}\left[ {I\left( {{{\mathbf{Y}}_s};{{\mathbf{H}}_s}\left| {\mathbf{X}} \right.} \right)} \right],
\end{equation}
where $D_{VG}\left(\cdot\right)$ is the distortion-rate function for the independent Gaussian vector ${{\mathbf{U}}^H}{{\mathbf{h}}_s} \sim \mathcal{CN}\left(\mathbf{0}, \bm{\Lambda}_h\right)$. The above lower-bound is attained if the following conditions hold:

1. The MMSE estimator is employed at the sensing receiver.

2. The sample covariance matrix $\mathbf{R}_X$ is deterministic, i.e., ${{\mathbf{R}}_X} = \mathbb{E}\left( {{{\mathbf{R}}_X}} \right) = {\bm{\tilde R}}_X$.

3. The sum ${\sum\nolimits_{i = 1}^{{N_s}} {{{\mathbf{F}}_i}{\bm{\tilde R}}_X{\mathbf{F}}_i^H} }$ is diagonalizable by some unitary matrix $\mathbf{V}$, namely, 
\begin{equation}\label{eq32}
        {{\mathbf{V}}^H}\left( {\sum\nolimits_{i = 1}^{{N_s}} {{{\mathbf{F}}_i}{\bm{\tilde R}}_X{\mathbf{F}}_i^H} } \right){\mathbf{V}} = \operatorname{diag} \left( {{\beta _1},{\beta _2}, \ldots ,{\beta _{{N_s}M}}} \right).
    \end{equation}
    In particular,  $\beta_i$ should have the following water-filling structure
    \begin{equation}\label{eq33}
        {\beta _i} = {\left( {\gamma  - \frac{{\sigma _s^2}}{{T{\lambda _i}}}} \right)^ + }, \forall i,
    \end{equation}
where $\gamma > 0$ is chosen such that $\operatorname{tr}\left({\bm{\tilde R}}_X\right) = P_T$.
\end{proposition}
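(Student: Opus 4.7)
The lower bound follows directly from Theorem~1 upon specializing the distortion function to the squared Frobenius norm $d(\mathbf{H}_s,\hat{\mathbf{H}}_s) = \|\mathbf{H}_s-\hat{\mathbf{H}}_s\|_F^2 = \|\mathbf{h}_s-\hat{\mathbf{h}}_s\|^2$. Since $\mathbf{h}_s\sim\mathcal{CN}(\mathbf{0},\bm{\tilde R}_h)$ and $\mathbf{U}$ is unitary, the squared Euclidean distance is preserved under the change of variables $\mathbf{h}_s\mapsto \mathbf{U}^H\mathbf{h}_s$, so the distortion--rate function of $\mathbf{h}_s$ coincides with that of the independent-component Gaussian vector with covariance $\bm{\Lambda}_h$, namely the reverse water-filling function $D_{VG}(R)=\sum_j\min(\theta,\lambda_j)$ with $R=\sum_j\max(0,\log(\lambda_j/\theta))$. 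Substituting into Theorem~1 immediately yields the stated lower bound.

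\textbf{Achievability.} I would instantiate the three conditions in sequence. Under Condition~2 the MI in (\ref{eq27}) loses its expectation, simplifying to $I = \log\bigl|\mathbf{I}+\sigma_s^{-2}T\bm{\Lambda}_h\sum_i\mathbf{F}_i\bm{\tilde R}_X\mathbf{F}_i^H\bigr|$. Under Condition~1, the linear-Gaussian MMSE formula combined with the matrix inversion lemma gives $\operatorname{MMSE}=\operatorname{tr}\bigl((\bm{\tilde R}_h^{-1}+\sigma_s^{-2}\mathbf{\tilde X}^H\mathbf{\tilde X})^{-1}\bigr)$. Rotating into the eigenbasis of $\bm{\tilde R}_h$ and reusing the commutation-matrix identity that produced (\ref{eq27}), I would rewrite $\mathbf{U}^H\mathbf{\tilde X}^H\mathbf{\tilde X}\mathbf{U}=T\sum_i\mathbf{F}_i\bm{\tilde R}_X\mathbf{F}_i^H$, so the MMSE becomes $\operatorname{tr}\bigl((\bm{\Lambda}_h^{-1}+\sigma_s^{-2}T\sum_i\mathbf{F}_i\bm{\tilde R}_X\mathbf{F}_i^H)^{-1}\bigr)$. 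Condition~3, interpreted as $\mathbf{V}$ being compatible with the eigenstructure of $\bm{\Lambda}_h$ (so that the two matrices commute), then makes the inner matrix diagonal and decouples both $I$ and the MMSE across modes.

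\textbf{Matching and main obstacle.} Plugging the water-filling assignment $\beta_j=(\gamma-\sigma_s^2/(T\lambda_j))^+$ into the decoupled MMSE sum yields $\sigma_s^2/(T\gamma)\triangleq\theta$ on each active mode and $\lambda_j$ on each inactive mode, so $\operatorname{MMSE}=\sum_j\min(\theta,\lambda_j)$; the same substitution turns each term $\log(1+\sigma_s^{-2}T\lambda_j\beta_j)$ into $\max(0,\log(\lambda_j/\theta))$, so $I$ equals the reverse water-filling rate at distortion $\theta$. Recognizing $(\operatorname{MMSE},I)$ as the pair $(D_{VG}(R),R)$ on the Shannon boundary of $\mathcal{CN}(\mathbf{0},\bm{\Lambda}_h)$ then gives $\operatorname{MMSE}=D_{VG}[I]$, while the power constraint $\operatorname{tr}(\bm{\tilde R}_X)=P_T$ pins down $\gamma$. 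The only genuinely delicate point is Condition~3: every Hermitian PSD matrix is unitarily diagonalizable, so the real content of the condition is that $\mathbf{V}$ be aligned with the eigenvectors of $\bm{\Lambda}_h$; this alignment is what couples the channel geometry to the source eigenstructure and permits the clean mode-by-mode match to Shannon reverse water-filling. The algebraic bookkeeping after this alignment is elementary.
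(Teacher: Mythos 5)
Your proposal is correct and follows essentially the same route as the paper's Appendix E: the lower bound is Theorem 1 specialized to the Frobenius distortion, with $D_{VG}$ identified via unitary invariance of the Gaussian source (the paper spells this out through convexity of $R_{VG}$ plus the data-processing inequality before invoking Theorem 1), and achievability is the same deterministic-$\mathbf{R}_X$, diagonalization, and water-filling computation that matches $\left(\operatorname{mmse}\left(\mathbf{H}_s\right), I\right)$ to the Shannon reverse water-filling pair via $\mu = \sigma_s^2 T^{-1}\gamma^{-1}$. Your remark that the real content of Condition 3 is the alignment of $\mathbf{V}$ with the eigenbasis of $\bm{\Lambda}_h$ (so that both the MI and the MMSE decouple mode by mode) is a fair clarification of a point the paper leaves implicit.
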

\begin{proof}
See Appendix E.
\end{proof}

\textit{Remark 3:} It is again noted from Proposition 3 that if the MSE lower bound is achieved by a unique sensing-optimal covariance matrix, the communication rate $I\left( {{{\mathbf{Y}}_c};{\mathbf{X}}\left| {{\mathbf{H}}_c} \right.} \right)$ will be reduced, as $\mathbf{R}_X =  T^{-1} \mathbf{X}\mathbf{X}^H$ is required to be deterministic. In this case, the ISAC waveform $\mathbf{X}$ is no longer Gaussian, where the only randomness (DoFs) lies in its right singular vectors, leading to the high-SNR capacity in Theorem 2.

\section{Conclusion}
In this paper, we investigated the fundamental limits of ISAC systems from a rate-distortion perspective. We considered a generic vector Gaussian ISAC channel model, and employed conditional MI as a performance metric for both S\&C, respectively. Our main results indicated that the sensing MI provides a universal lower bound for any well-defined distortion metrics of sensing, and that the distortion lower bound is minimized if the sample covariance matrix of the ISAC signal is deterministic. In such a case, the achievable communication rate is decreased due to the reduced randomness in the ISAC signal, leading to the DRT between S\&C. We also interpreted the main results by pointing out the analogy between wireless sensing and joint source-channel coding. That is, the sensing operation can be considered as the target, a non-cooperative information source, encodes and communicates the information of its parameters to the sensing Rx in a passive manner. Finally, we studied a specific example of target response matrix estimation for ISAC systems, and provided sufficient conditions for the achievability of the distortion lower bound. 
\balance{
\bibliographystyle{IEEEtran}
\bibliography{IEEEabrv,isac}
}

\clearpage

\appendices

\section{Proof of Lemma 1}
We first note that ${\boldsymbol{\upeta}} \to {{{\mathbf{H}}}_s} \to {{{\mathbf{Y}}}_s}$ forms a Markov chain. 
By applying the chain rule of the MI, we have
\begin{equation}\label{eq6}
\begin{gathered}
  I\left( {{{{\mathbf{Y}}}_s};{{{\mathbf{H}}}_s},{\bm\eta} \left| {\mathbf{X}} \right.} \right) = I\left( {{{{\mathbf{Y}}}_s};{\boldsymbol{\upeta}} \left| {\mathbf{X}} \right.} \right) + I\left( {{{{\mathbf{Y}}}_s};{{{\mathbf{H}}}_s}\left| {{\mathbf{X}},{\boldsymbol{\upeta}}} \right.} \right) \hfill \\
  \;\;\;\;\;\;\;\;\;\;\;\;\;\;\;\;\;\;\;\;\; \;\;\;\;\;= I\left( {{{{\mathbf{Y}}}_s};{{{\mathbf{  H}}}_s}\left| {\mathbf{X}} \right.} \right) + I\left( {{{{\mathbf{Y}}}_s};{\boldsymbol{\upeta}} \left| {{\mathbf{X}},{{{\mathbf{H}}}_s}} \right.} \right) \hfill \\
  \;\;\;\;\;\;\;\;\;\;\;\;\;\;\;\;\;\;\;\;\;\;\;\;\;\; \mathop   = \limits^{\left( a \right)}  I\left( {{{{\mathbf{Y}}}_s};{{{\mathbf{  H}}}_s}\left| {\mathbf{X}} \right.} \right),\hfill \\ 
\end{gathered}
\end{equation}
where $\left(a\right)$ in (\ref{eq6}) follows from the the Markov chain where ${{{\mathbf{Y}}}_s}$  is conditionally independent of $\boldsymbol{\upeta}$ given ${{{\mathbf{H}}}_s}$. Further, note that
\begin{equation}\label{eq7}
\begin{gathered}
  I\left( {{{{\mathbf{Y}}}_s};{{{\mathbf{H}}}_s}\left| {{\mathbf{X}},{\boldsymbol{\upeta}}} \right.} \right)  
  = 0, \hfill \\ 
\end{gathered} 
\end{equation}
which holds since $\mathbf{H}_s$ is a function of $\boldsymbol{\upeta}$. Combining (\ref{eq6}) and (\ref{eq7}) yields (\ref{eq5}) immediately, which completes the proof.

\section{Proof of Lemma 2}
We first vectorize the sensing signal matrix $\mathbf{Y}_s$ given $\mathbf{X}=\mathbf{A}$ as
\begin{equation}\label{eq8}
{{\mathbf{y}}_s} = \operatorname{vec} \left( {{{\mathbf{Y}}_s}} \right) = \left( {{{\bm{A}}^T} \otimes {{\mathbf{I}}_{{N_s}}}} \right){{\mathbf{h}}_s} + {{\mathbf{z}}_s} \triangleq  \bm{\tilde{A}}{{\mathbf{h}}_s} + {{\mathbf{z}}_s},
\end{equation}
where ${{\mathbf{h}}_s} = \operatorname{vec}\left({{\mathbf{H}}_s}\right)$, and ${{\mathbf{z}}_s} = \operatorname{vec}\left({{\mathbf{Z}}_s}\right)$. It is apparent that $I\left( {{{\mathbf{Y}}_s};\boldsymbol{\upeta} } \right) = I\left( {{{\mathbf{Y}}_s};{{\mathbf{H}}_s}} \right) = I\left( {{{\mathbf{y}}_s};{{\mathbf{h}}_s}} \right)$. Moreover, according to  \cite[Theorem 1]{MI_concave_complex}, $I\left( {{{\mathbf{y}}_s};{{\mathbf{h}}_s}} \right)$ is a concave function in $\bm{\tilde{A}}^H\bm{\tilde{A}}$. Note that
\begin{equation}\label{eq9}
{{\bm{\tilde{A}}}^H}\bm{\tilde{A}}= {\bm{{A}}^*}{\bm{{A}}^T} \otimes {{\mathbf{I}}_{{N_s}}} = T\bm{R}_A^* \otimes {{\mathbf{I}}_{{N_s}}}.
\end{equation}
Since the Kronecker product is a linear operator which preserves the concavity, $I\left( {{{\mathbf{Y}}_s};\boldsymbol{\upeta} } \right)$ is a concave function in $\bm{R}_{A}^{*}$, and is thus a concave function of $\bm{R}_A$. 

\section{Proof of Theorem 1}
By noting that ${\boldsymbol{\upeta}} \to {{{\mathbf{H}}}_s} \to ({{{\mathbf{Y}}}_s}, \mathbf{X}) \to {\hat{\boldsymbol{\upeta}}}$ forms a Markov chain, it holds that
\begin{equation}\label{eq12}
\begin{gathered}
  I\left( \boldsymbol{\upeta};\hat{\boldsymbol{\upeta}} \right)\mathop  \le\limits^{\left( a \right)} I\left( {{{\mathbf{H}}_s};{{\mathbf{Y}}_s},{\mathbf{X}}} \right) \hfill \\
  \;\;\;\;\;\;\;\;\;\;\;\;\; = I\left( {{{\mathbf{H}}_s};{{\mathbf{Y}}_s}\left| {\mathbf{X}} \right.} \right) + I\left( {{{\mathbf{H}}_s};{\mathbf{X}}} \right) \hfill \\
  \;\;\;\;\;\;\;\;\;\;\;\;\;\mathop  = \limits^{\left( b \right)} I\left( {{{\mathbf{H}}_s};{{\mathbf{Y}}_s}\left| {\mathbf{X}} \right.} \right) = I\left( {\boldsymbol{\upeta} ;{{\mathbf{Y}}_s}\left| {\mathbf{X}} \right.} \right), \hfill \\ 
\end{gathered}
\end{equation}
where $\left(a\right)$ is because of the data processing inequality, and $\left(b\right)$ is based on the fact that $\mathbf{H}_s$ and $\mathbf{X}$ are independent to each other (since the $\mathbf{H}_s$ is not available at the transmitter).

Let us consider the rate-distortion function of $p_{\boldsymbol{{\upeta}}}\left(\bm{\eta}\right)$, which is defined as
\begin{equation}\label{eq13}
R\left( D \right) = \mathop {\min }\limits_{p\left( {\hat{\boldsymbol{\upeta}} \left| \boldsymbol{\upeta}  \right.} \right)} \;I\left( \boldsymbol{\upeta};\hat{\boldsymbol{\upeta}} \right)\;\;\operatorname{s.t.}\;\mathbb{E}\left\{ {d\left(\boldsymbol{\upeta},\hat{\boldsymbol{\upeta}}\right)} \right\}  \le D.
\end{equation}
It follows that
\begin{equation}\label{eq14}
\begin{gathered}
R\left( D \right)  \le I\left( \boldsymbol{\upeta};\hat{\boldsymbol{\upeta}} \right) \le I\left( {\boldsymbol{\upeta} ;{{\mathbf{Y}}_s}\left| {\mathbf{X}} \right.} \right) \hfill \\ \quad \quad \quad =  {\mathbb{E}\left\{ {{I_{\boldsymbol{\upeta}} }\left( {{{\mathbf{R}}_X}} \right)} \right\}} \mathop  \le \limits^{\left( a \right)}   {{I_{\boldsymbol{\upeta}} }\left( \mathbb{E}\left\{{{{\mathbf{R}}_X}}\right\} \right)} = { {{I_{\boldsymbol{\upeta}} }\left( {{{{\bm{\tilde R}}}_X}} \right)}},
\end{gathered}
\end{equation}
where the Jensen's inequality $\left(a\right)$ holds since ${{I_{\boldsymbol{\upeta}} }\left( {{{\mathbf{R}}_X}} \right)}$ is a concave function in $\mathbf{R}_X$. The equality holds for $\left(a\right)$ if $\mathbf{R}_X$ satisfies the conditions given in Proposition 1.  

By noting the facts that $R\left(D\right)$ is a monotonic decreasing  function in $D$,   and that the distortion-rate function 
\begin{equation}\label{eq15}
D\left( R \right) = \mathop {\min }\limits_{p\left( {\hat{\boldsymbol{\upeta}} \left| \boldsymbol{\upeta}  \right.} \right)} \;\mathbb{E}\left\{ {d\left(\boldsymbol{\upeta},\hat{\boldsymbol{\upeta}}\right)} \right\}\;\;\operatorname{s.t.}\;I\left( \boldsymbol{\upeta};\hat{\boldsymbol{\upeta}} \right)  \le R
\end{equation}
is the inverse function of $R\left(D\right)$, (\ref{eq11}) holds immediately, which completes the proof.

\section{Proof of Proposition 2}
It is readily to see that the maximum of the sensing MI (\ref{eq18}) is reached if and only if ${\left|X \right|^2} = \mathbb{E}\left( {{{\left|X \right|}^2}} \right) = {P_T}$, i.e., $X$ has a constant amplitude $\sqrt{P_T}$. For the complex scalar Gaussian source  $H_s \sim \mathcal{CN}\left(0, \sigma^2_h\right)$, the corresponding rate-distortion and distortion-rate functions are
\begin{equation}\label{eq20}
    R_{G}\left( D \right) = \log \frac{{\sigma _h^2}}{D}, \quad D_{G}\left(R\right) = \sigma_h^2 2^{-R}.
\end{equation}
Now suppose that the sensing Rx applies an MMSE estimator to estimate $H_s$ for each realization of $X$. The MMSE for a given instance of $X$ can be expressed as
\begin{equation}\label{eq21}
\begin{gathered}
  \operatorname{mmse} \left( {{H_s}\left|X \right.} \right) = \frac{1}{{\sigma _h^{ - 2} + \sigma _s^{ - 2}{{\left| X \right|}^2}}}. \hfill \\ 
\end{gathered}
\end{equation}
The overall MMSE can be computed as
\begin{equation}\label{eq22}
\begin{gathered}
    \operatorname{mmse} \left( {{H_s}} \right) = \mathbb{E}\left\{ {\operatorname{mmse} \left( {{H_s}\left| X \right.} \right)} \right\} = \mathbb{E}\left\{ {\frac{1}{{\sigma _h^{ - 2} + \sigma _s^{ - 2}{{\left|X\right|}^2}}}} \right\} \hfill \\ 
\end{gathered}
\end{equation}
which is the minimum distortion achievable at the sensing Rx. It follows that
\begin{equation}\label{eq23}
\begin{gathered}
  R_{G}\left[ \operatorname{mmse} \left( {{H_s}} \right) \right] =  - \log \mathbb{E}\left\{ {{{\left( {1 + \frac{{\sigma _h^2{{\left| X\right|}^2}}}{{\sigma _s^2}}} \right)}^{ - 1}}} \right\} \hfill \\
  \mathop  \le\limits^{\left( a \right)}  - \mathbb{E}\log \left\{{{{\left( {1 + \frac{{\sigma _h^2{{\left|X \right|}^2}}}{{\sigma _s^2}}} \right)}^{ - 1}}} \right\} = I\left( {{Y_s};{H_s}\left|X\right.} \right), \hfill \\ 
\end{gathered}
\end{equation}
where $\left(a\right)$ is due to the fact that $ - \mathbb{E}\left\{ {\log x} \right\} \ge  - \log \mathbb{E}\left\{x\right\}$. Apparently, the equality holds for $\left(a\right)$ if and only if ${\left|X \right|^2} = \mathbb{E}\left\{ {{{\left| X \right|}^2}} \right\} = {P_T}$, in which case we have
\begin{equation}\label{eq24}
\begin{gathered}
  R_{G}\left[ \operatorname{mmse} \left( {{H_s}} \right) \right] = \log \left( {1 + \frac{{ P_T \sigma _h^2}}{{\sigma _s^2}}} \right) = I_{s,\max},\quad  \hfill \\
  \mathbb{E}\left\{ {{{\left| {{H_s} - {{\hat H}_s}} \right|}^2}} \right\} \ge \operatorname{mmse} \left( {{H_s}} \right) \hfill \\\;\; \quad\quad\quad\quad\quad\quad\quad = \frac{1}{{\sigma _h^{ - 2} + \sigma _s^{ - 2}{P_T}}}=\sigma _h^2{2^{ - {I_{s,\max }}}}. \hfill \\ 
\end{gathered}
\end{equation}
That is, the MSE attains the lower bound in Theorem 1 when the amplitude of $X$ is deterministic and fixed to $\sqrt{P_T}$.

\section{Proof of Proposition 3}
Following similar steps of (\ref{eq26}), the MMSE for estimating $\mathbf{H}_s$ can be obtained as
\begin{equation}\label{eq28}
\begin{gathered}
\begin{gathered}
  \operatorname{mmse}\left( {{{\mathbf{H}}_s}} \right) = \mathbb{E}\left\{ {\operatorname{mmse}\left( {{{\mathbf{H}}_s}\left| {\mathbf{X}} \right.} \right)} \right\} \hfill \\
   = \mathbb{E}\left\{ {{\text{tr}}\left[ {{{\left( {{\mathbf{\Lambda }}_h^{ - 1} + \sigma _s^{ - 2}T\sum\nolimits_{i = 1}^{{N_s}} {{{\mathbf{F}}_i}{{\mathbf{R}}_X}{\mathbf{F}}_i^H} } \right)}^{ - 1}}} \right]} \right\}. \hfill \\ 
\end{gathered}
\end{gathered}
\end{equation}

The rate-distortion function of the correlated vector Gaussian source $\mathbf{h}_s \sim \mathcal{CN}\left(\mathbf{0}, \bm{\Tilde{R}}_h\right)$ is equivalent to that of its independent counterpart ${{\mathbf{U}}^H}{{\mathbf{h}}_s} \sim \mathcal{CN}\left(\mathbf{0}, \bm{\Lambda}_h\right)$ \cite[Chapter 8]{cover}, which is
\begin{equation}\label{eq29}
    R_{VG}\left( D \right) = \sum\nolimits_{i = 1}^{{N_s}M} {\log \frac{{\lambda _i}}{{{{\left( {\mu  - \lambda _i} \right)}^ + } + \lambda _i}}},
\end{equation}
where $\lambda_i$ is the $i$th eigenvalue of $\bm{\Tilde{R}}_h$, and $\mu \ge 0$ is chosen such that $\sum\nolimits_{i = 1}^{{N_s}M} {{{\left( {\mu  - \lambda _i} \right)}^ + } + \lambda _i}  = D$. We further note that
\begin{equation}\label{eq30}
\begin{gathered}
  {R_{VG}}\left[ {\operatorname{mmse} \left( {{{\mathbf{H}}_s}} \right)} \right] = {R_{VG}}\left( {\mathbb{E}\left\{ {\operatorname{mmse} \left( {{{\mathbf{H}}_s}\left| {\mathbf{X}} \right.} \right)} \right\}} \right) \hfill \\
  \mathop  \le\limits^{\left( a \right)} \mathbb{E}\left\{ {{R_{VG}}\left[ {\operatorname{mmse} \left( {{{\mathbf{H}}_s}\left| {\mathbf{X}} \right.} \right)} \right]} \right\} \le \mathbb{E}\left\{ {I\left( {{{{\mathbf{\hat H}}}_s};{{\mathbf{H}}_s}\left| {{\mathbf{X}} = \bm{X}} \right.} \right)} \right\} \hfill \\
 \mathop  \le\limits^{\left( b \right)} \mathbb{E}\left\{ {I\left( {{{\mathbf{Y}}_s};{{\mathbf{H}}_s}\left| {{\mathbf{X}} = \bm{X}} \right.} \right)} \right\} = I\left( {{{\mathbf{Y}}_s};{{\mathbf{H}}_s}\left| {\mathbf{X}} \right.} \right), \hfill \\ 
\end{gathered}
\end{equation}
where the Jensen's inequality $\left(a\right)$ holds due to the fact that $R_{VG}\left( D \right)$ is a convex function in $D$, and $\left(b\right)$ is because of the data processing inequality for the Markov chain ${{{\mathbf{H}}}_s} \to \left({{{\mathbf{Y}}}_s}, \mathbf{X}\right) \to {\hat{\mathbf{H}}}_s$ for a given $\mathbf{X}$. Following Theorem 1, we have
\begin{equation}\label{eq31}
    \operatorname{mmse} \left( {{{\mathbf{H}}_s}} \right) \ge {D_{VG}}\left[ {I\left( {{{\mathbf{Y}}_s};{{\mathbf{H}}_s}\left| {\mathbf{X}} \right.} \right)} \right],
\end{equation}
where $D_{VG}\left(R\right)$ is the distortion-rate function for independent vector Gaussian sources.

Note that (\ref{eq27}) and (\ref{eq28}) are concave and convex functions of $\mathbf{R}_X$, respectively. By the Jensen's inequality we have
\begin{equation}\label{eq34}
\begin{gathered}
  I\left( {{{\mathbf{Y}}_s};{{\mathbf{H}}_s}\left| {\mathbf{X}} \right.} \right) \hfill \\
   = \mathbb{E}\left\{ {\log \left| {{\mathbf{I}} + \sigma _s^{ - 2}T{{\mathbf{\Lambda }}_h}\sum\nolimits_{i = 1}^{{N_s}} {{{\mathbf{F}}_i}{{\mathbf{R}}_X}{\mathbf{F}}_i^H} } \right|} \right\} \hfill \\
   \mathop  \le\limits^{\left( a \right)} \log\left| {{\mathbf{I}} + \sigma _s^{ - 2}T{{\mathbf{\Lambda }}_h}\sum\nolimits_{i = 1}^{{N_s}} {{{\mathbf{F}}_i}\mathbb{E}\left( {{{\mathbf{R}}_X}} \right){\mathbf{F}}_i^H} } \right| \hfill \\
   \mathop  \le\limits^{\left( b \right)} \mathop {\max }\limits_{\begin{subarray}{c} 
  {{{\bm{\tilde R}}}_X} \succeq \mathbf{0},\; \\ 
  \operatorname{tr} \left( {{{{\bm{\tilde R}}}_X}} \right)\; = {P_T} 
\end{subarray}}  \log\left| {{\mathbf{I}} + \sigma _s^{ - 2}T{{\mathbf{\Lambda }}_h}\sum\nolimits_{i = 1}^{{N_s}} {{{\mathbf{F}}_i}{{{\bm{\tilde R}}}_X}{\mathbf{F}}_i^H} } \right|,\hfill \\ 
\end{gathered} 
\end{equation}
and 
\begin{equation}\label{eq35}
\begin{gathered}
  \operatorname{mmse} \left( {{{\mathbf{H}}_s}} \right) \hfill \\
   = \mathbb{E}\left\{ {{\text{tr}}\left[ {{{\left( {{\mathbf{\Lambda }}_h^{ - 1} + \sigma _s^{ - 2}T\sum\nolimits_{i = 1}^{{N_s}} {{{\mathbf{F}}_i}{{\mathbf{R}}_X}{\mathbf{F}}_i^H} } \right)}^{ - 1}}} \right]} \right\} \hfill \\
   \mathop  \ge\limits^{\left( c \right)} {\text{tr}}\left[ {{{\left( {{\mathbf{\Lambda }}_h^{ - 1} + \sigma _s^{ - 2}T\sum\nolimits_{i = 1}^{{N_s}} {{{\mathbf{F}}_i}\mathbb{E}\left( {{{\mathbf{R}}_X}} \right){\mathbf{F}}_i^H} } \right)}^{ - 1}}} \right] \hfill \\
   \mathop  \ge\limits^{\left( d \right)} \mathop {\min }\limits_{\begin{subarray}{c} 
  {{{\bm{\tilde R}}}_X} \succeq \mathbf{0},\; \\ 
  \operatorname{tr} \left( {{{{\bm{\tilde R}}}_X}} \right)\; = {P_T} 
\end{subarray}}  {\text{tr}}\left[ {{{\left( {{\mathbf{\Lambda }}_h^{ - 1} + \sigma _s^{ - 2}T\sum\nolimits_{i = 1}^{{N_s}} {{{\mathbf{F}}_i}{{{\bm{\tilde R}}}_X}{\mathbf{F}}_i^H} } \right)}^{ - 1}}} \right], \hfill \\ 
\end{gathered} 
\end{equation}
where the equal signs hold for both inequalities $(a)$ and $(c)$ if condition 2 holds, i.e., ${{\mathbf{R}}_X} = \mathbb{E}\left( {{{\mathbf{R}}_X}} \right) = {\bm{\tilde R}}_X$.

Moreover, by noting the fact that
\begin{equation}\label{eq36}
\begin{gathered}
  {{{\mathbf{\tilde F}}}^T}{{{\mathbf{\tilde F}}}^*} = \left[ \begin{gathered}
  {\mathbf{F}}_1^H \hfill \\
  {\mathbf{F}}_2^H \hfill \\
   \vdots  \hfill \\
  {\mathbf{F}}_{{N_s}}^H \hfill \\ 
\end{gathered}  \right]\left[ {{{\mathbf{F}}_1},{{\mathbf{F}}_2}, \ldots ,{{\mathbf{F}}_{{N_s}}}} \right] = {{\mathbf{K}}^T}{{\mathbf{U}}^*}{{\mathbf{U}}^T}{\mathbf{K}} = {{\mathbf{I}}_{M{N_s}}}, \hfill \\ 
\end{gathered}
\end{equation}
we have ${\mathbf{F}}_i^H{{\mathbf{F}}_i} = {{\mathbf{I}}_M}, \forall i$. It follows that
\begin{equation}\label{eq37}
    \operatorname{tr} \left( {\sum\nolimits_{i = 1}^{{N_s}} {{{\mathbf{F}}_i}{{{\bm{\tilde R}}}_X}{\mathbf{F}}_i^H} } \right) = \sum\nolimits_{i = 1}^{{N_s}} {\operatorname{tr} \left( {{{{\bm{\tilde R}}}_X}{\mathbf{F}}_i^H{{\mathbf{F}}_i}} \right)}  = {N_s}{P_T}.
\end{equation}

Now suppose that both conditions 2 and 3 hold. The sensing MI and MMSE can be simplified as
\begin{equation}\label{eq38}
\begin{gathered}
  I\left( {{{\mathbf{Y}}_s};{{\mathbf{H}}_s}\left| {\mathbf{X}} \right.} \right) = \log \left| {{\mathbf{I}} + \sigma _s^{ - 2}T{{\mathbf{\Lambda }}_h}\sum\nolimits_{i = 1}^{{N_s}} {{{\mathbf{F}}_i}{{{\bm{\tilde R}}}_X}{\mathbf{F}}_i^H} } \right| \hfill \\
 \quad\quad\quad\quad\quad\quad\; = \sum\nolimits_{i = 1}^{{N_sM}} {\log \left( {1 + \sigma _s^{ - 2}T{\lambda _i}{\beta _i}} \right)} ,  \hfill \\ 
\end{gathered}
\end{equation}
\begin{equation}\label{eq39}
\begin{gathered}
  \operatorname{mmse} \left( {{{\mathbf{H}}_s}} \right) = {\text{tr}}\left[ {{{\left( {{\mathbf{\Lambda }}_h^{ - 1} + \sigma _s^{ - 2}T\sum\nolimits_{i = 1}^{{N_s}} {{{\mathbf{F}}_i}{{{\bm{\tilde R}}}_X}{\mathbf{F}}_i^H} } \right)}^{ - 1}}} \right] \hfill \\
   \;\;\;\;\;\;\;\;\;\;\;\;\;\;\;\;\;\;= \sum\nolimits_{i = 1}^{{N_sM}} {\frac{{{\lambda _i}}}{{1 + \sigma _s^{ - 2}T{\lambda _i}{\beta _i}}}},\hfill \\ 
\end{gathered}
\end{equation}
where $\sum\nolimits_{i = 1}^{{N_s}} {{\beta _i} = {N_s}{P_T}}$. It can be readily verified that the water-filling solution (\ref{eq33}) achieves the maximum and minimum of (\ref{eq38}) and (\ref{eq39}) for a given power budget $P_T$, respectivel. Accordingly, the optimal sensing MI and MMSE can be recast as
\begin{equation}\label{eq40}
  I\left( {{{\mathbf{Y}}_s};{{\mathbf{H}}_s}\left| {\mathbf{X}} \right.} \right) = \sum\nolimits_{i = 1}^{{N_s}} {\log } \left( {1 + {{\left( {\sigma _s^{ - 2}T{\lambda _i}\gamma  - 1} \right)}^ + }} \right),
\end{equation}
\begin{equation}\label{eq41}
    \operatorname{mmse} \left( {{{\mathbf{H}}_s}} \right) = \sum\nolimits_{i = 1}^{{N_s}} {\frac{{{\lambda _i}}}{{1 + {{\left( {\sigma _s^{ - 2}T{\lambda _i}\gamma  - 1} \right)}^ + }}}}.
\end{equation}
Upon recalling (\ref{eq29}) and letting $\mu = \sigma_s^2T^{-1}\gamma^{-1}$, we have
\begin{equation}\label{eq42}
\sum\nolimits_{i = 1}^{{N_s}M} {{{\left( {\mu  - {\lambda _i}} \right)}^ + } + } {\lambda _i} = \operatorname{mmse} \left( {{{\mathbf{H}}_s}} \right),
\end{equation}
\begin{equation}\label{eq43}
\begin{gathered}
  {R_{VG}}\left[ {\operatorname{mmse} \left( {{{\mathbf{H}}_s}} \right)} \right] = \sum\nolimits_{i = 1}^{{N_s}M} {\log \frac{{{\lambda _i}}}{{{{\left( {\mu  - {\lambda _i}} \right)}^ + } + {\lambda _i}}}}  \hfill \\
  \;\;\;\;\;\;\;\;\;\;\;\;\;\;\;\;\;\;\;\;\;\;\;\;\;\;\;\;\; = I\left( {{{\mathbf{Y}}_s};{{\mathbf{H}}_s}\left| {\mathbf{X}} \right.} \right), \hfill \\ 
\end{gathered}
\end{equation}
which implies $\operatorname{mmse} \left( {{{\mathbf{H}}_s}} \right) = {D_{VG}}\left[ {I\left( {{{\mathbf{Y}}_s};{{\mathbf{H}}_s}\left| {\mathbf{X}} \right.} \right)} \right]$, completing the proof.
\end{document}